\newtheorem{theorem}{Theorem}
\newtheorem{proposition}{Proposition}
\newtheorem{corollary}{Corollary}
\newtheorem{lemma}{Lemma}
\newtheorem{definition}{Definition}
\newtheorem{remark}{Remark}
\newtheorem{example}{Example}
\newcommand{\Der}{\ensuremath{\partial}}
\newcommand{\Int}{\ensuremath{{\textstyle\int}}}
\newcommand{\PX}{\mathscr{P}(\Span{X})}
\DeclareMathOperator{\id}{id}
\DeclareMathOperator{\supp}{supp}
\DeclareMathOperator{\Hom}{Hom}
\DeclareMathOperator{\DM}{DM}
\newcommand{\R}{\ensuremath{R}}
\newcommand\K{\mathbb{K}}
\newcommand\RX{R\Span{X}}
\newcommand\KX{\K\Span{X}}
\newcommand\Span[1]{\langle #1\rangle}
\newcommand\Qord{\leq_Q}
\DeclareMathOperator{\LM}{LM}
\DeclareMathOperator{\SP}{SP}
\DeclareMathOperator{\amb}{\mathfrak{a}} 
\newcommand\rewTrans{\overset{*}{\to}}
\newcommand\ie{i.e.}
\newcommand\Gr{Gr\"obner}
\title{Compatible rewriting of noncommutative polynomials for proving operator identities}
\author{Cyrille Chenavier, Clemens Hofstadler, \and Clemens G. Raab, and Georg Regensburger\thanks{This work was supported by the Austrian Science Fund (FWF): P~27229, P~32301, and P~32952\newline \texttt{\{cyrille.chenavier,clemens.hofstadler,clemens.raab,georg.regrensburger\}@jku.at}\newline
Institute for Algebra, Johannes Kepler University, Altenberger Stra\ss e 69, Linz, Austria}}
\date{}
\begin{document}
\maketitle

\begin{abstract}
  The goal of this paper is to prove operator identities using equalities
  between noncommutative polynomials. In general, a polynomial expression
  is not valid in terms of operators, since it may not be compatible with
  domains and codomains of the corresponding operators. Recently, some of
  the authors introduced a framework based on labelled quivers to
  rigorously translate polynomial identities to operator identities. In
  the present paper, we extend and adapt the framework to the context of
  rewriting and polynomial reduction. We give a sufficient condition on
  the polynomials used for rewriting to ensure that standard polynomial
  reduction automatically respects domains and codomains of operators.
  Finally, we adapt the noncommutative Buchberger procedure to compute
  additional compatible polynomials for rewriting. In the
  package \texttt{OperatorGB}, we also provide an implementation of the
  concepts developed.
\end{abstract}

\paragraph{Keywords}Rewriting, noncommutative polynomials, quiver representations,
automated proofs, completion

\section{Introduction}

Properties of linear operators can often be expressed in terms of identities
they satisfy. Algebraically, these identities can be represented in terms of
noncommutative polynomials in some set $X$. The elements of $X$ correspond to
basic operators and polynomial multiplication models composition
of  operators. Based on this, proving that a claimed operator identity follows
from assumed identities corresponds to the polynomial $f$, associated
to the claim, lying in the ideal generated by the set $F$ of polynomials associated to
the assumptions. However, ideal membership $f\in (F)$ is not enough for proving an
operator identity in general, since
computations with  noncommutative polynomials ignore compatibility conditions
between domains and codomains of the operators. 

In order to represent domains and codomains of
the operators, we use the framework introduced recently in~\cite{Raab2019FormalPO}.
So, we consider a quiver (\ie, a directed multigraph) $Q$, where
vertices correspond to functional spaces, edges correspond to basic operators
between those spaces and are labelled with symbols from $X$.
Then, paths in $Q$ correspond to composition of basic operators and induce monomials over $X$ that are compatible with $Q$.
Note that we can allow the same label for different edges 
if the corresponding operators satisfy the same identities in $F$. For instance,
differential and integral operators can act on different functional
spaces, as illustrated in our running example below.
For formal details and relevant notions, see Sections~\ref{sec:prelim} and \ref{sec:realizations}.
Informally, a polynomial is
compatible with the quiver if it makes sense in terms of operators and $f$ is
called a $Q$-consequence of $F$ if it can be obtained from $F$ by doing
computations using compatible polynomials only. This means that these computations
also make sense in terms of operators.

Obviously, the claim $f$ and the assumptions $F$ have to be compatible with $Q$.
In~\cite{Raab2019FormalPO}, it was shown that $f$ is a $Q$-consequence of $F$ if
$f\in (F)$ and each element of $F$ is uniformly compatible, which means that
all its monomials can be assigned the same combinations of domains and codomains.
This is in particular the case when each edge has a unique label and polynomials
do not have a constant term. Note that ideal membership can be checked
independently of $Q$ and is undecidable in general. In practice, it can often be
checked by computing a (partial) noncommutative \Gr\ basis $G$ of $F$ and reducing $f$ to zero
by $G$, see~\cite{MR1299371}.
The package \texttt{OperatorGB} \cite{hofstadler2019certifying} can check compatibitlity of polynomials with quivers and, based on partial \Gr\ bases, can compute explicit representations of polynomials in terms of generators of the ideal.
Versions for \textsc{Mathematica} and \textsc{SageMath} can be obtained at:
\begin{center}
\url{http://gregensburger.com/softw/OperatorGB}
\end{center}

In this paper, we generalize the formal definition of $Q$-con\-se\-quen\-ces to
the case when elements of $F$ are compatible but not necessarily uniformly compatible,
see Section~\ref{sec:Q-csq}. Then, we show in Section~\ref{sec:realizations} that
being a $Q$-consequence implies that the corresponding operator identity can indeed be
proven by computations with operators. Since elements of $F$ do not have to be
uniformly compatible, we impose in Section~\ref{sec:general_rew_systems}
restrictions on the polynomial rewriting, so that it respects the quiver. For the
same reason, we also impose restrictions on the computation of partial \Gr\ bases in
Section~\ref{sec:compatible_GB}.
Based on such a partial \Gr\ basis, one often can prove algorithmically that $f$ is a $Q$-consequence of $F$ just by standard polynomial reduction.
To this end, we also extend the package \texttt{OperatorGB}.

Gr\"obner bases for noncommutative polynomials have been applied to operator
identities in the pioneering work \cite{HeltonWavrik1994,HeltonStankusWavrik1998}, where \Gr\ bases are used to simplify matrix identities in linear systems theory.
In~\cite{HeltonStankus1999,Kronewitter2001}, the main strategy for solving matrix equations, coming from factorization of engineering systems and matrix completion problems, is to apply Gr\"obner bases with respect to an ordering appropriate for elimination.
The same approach was used in~\cite{RosenkranzBuchbergerEngl2003} to compute Green's operators for linear two-point boundary problems with constant coefficients.

If edges of the quiver have unique labels, it has been observed in the literature that the operations used in the noncommutative analog of Buchberger's algorithm respect compatibility of polynomials with domains and codomains of operators, cf.~\cite[Thm.~25]{HeltonStankusWavrik1998}.
See also Remark~\ref{rem:uniquelabels} and Theorem~\ref{thm:uniquelabels} for a formal statement using the framework of the present paper.
For an analogous observation in the context of path algebras, see~\cite[Sec.~47.10]{Mora2016}.
We were informed in personal communication that questions related to proving operator identities via computations of Gr\"obner bases are also addressed in \cite{LevandovskyySchmitz2019}.

Alternatively, computations with operators can also be modelled by partial algebras arising from diagrams, for which an analogous notion of Gr\"obner bases was sketched in \cite[Sec.~9]{Bergman1978} and developed in \cite{BokutChenLi2012}.
Moreover, generalizations of Gr\"obner bases and syzygies are considered in \cite{GuiraudHoffbeckMalbos2019}, where higher-dimensional linear rewriting systems are introduced for rewriting of operators with domains and codomains.

We conclude this section with a small running example that we use throughout the paper to illustrate the notions that we introduce from practical point of view.
A \textsc{Mathematica} notebook that illustrates the use of the new functionality of the package using this running example can be obtained at the webpage mentioned above.

\begin{example}
  Consider the inhomogeneous linear differential equation
   \begin{equation*}
   y^{\prime\prime}(x)+A_1(x)y^\prime(x)+A_0(x)y(x)=r(x)
   \end{equation*}
   and assume that it can be factored into the two first-order equations
   \[
   y^\prime(x)-B_2(x)y(x)=z(x) \quad\text{and}\quad z^\prime(x)-B_1(x)z(x)=r(x).
   \]
   It is well-known that a particular solution is given by the nested
   integral
   \begin{equation}\label{equ:sol}
     y(x) = H_2(x)\int_{x_2}^xH_2(t)^{-1}H_1(t)\int_{x_1}^tH_1(u)^{-1}r(u)\,du\,dt,
   \end{equation}
   where $H_i(x)$ is a solution of $y^\prime(x)-B_i(x)y(x)=0$ such that $H_i(x)^{-1}$ exists.
   In order to translate this claim into an operator identity, let us consider
   the differentiation $\partial:y(x)\mapsto y'(x)$ and the two
   integrations
   \[
    \Int_1:y(x)\mapsto\int_{x_1}^xy(t)\,dt \quad\text{and}\quad \Int_2:y(x)\mapsto\int_{x_2}^xy(t)\,dt.
   \]
   Moreover, any function $F(x)$ induces
   a multiplication operator $F:y(x)\mapsto y(x)F(x)$ and $\cdot$ denotes the composition
   of operators. Thus, the factored differential equation and the solution correspond to
   the following operators
   \[
   L:=(\Der-B_1)\cdot(\Der-B_2),\quad S:=H_2{\cdot}\Int_2{\cdot}H_2^{-1}{\cdot}H_1{\cdot}\Int_1{\cdot}H_1^{-1}
   \]
   and the claim corresponds to the identity $L\cdot S=\id$. In terms of functions,
   this means that $y(x)=(Sr)(x)$ is a solution of
   \begin{equation}\label{ex:inhom_ODE}
     (Ly)(x)=r(x).
   \end{equation}
   Using the Leibniz rule,
   $H_i$ being a solution of the factor differential equation corresponds to
   \[\Der\cdot  H_i=H_i\cdot \Der+B_i\cdot H_i\]
   and the invertibility corresponds to $H_i\cdot H_i^{-1}=\id$. The last fact we use
   for proving the claim is the fundamental theorem of calculus, which corresponds to
   \[\Der\cdot\Int_1=\id,\qquad\Der\cdot\Int_2=\id.\]
   In Example~\ref{ex:compatible}, we will show how these operator identities can be translated into noncommutative polynomials that are compatible with a quiver.
\end{example}

\section{Preliminaries}
\label{sec:prelim}

In this section, we recall the main definitions and basic facts from~\cite{Raab2019FormalPO} that formalize compatibility of polynomials with a labelled quiver.

We fix a commutative ring $R$ with unit as well as a set $X$.
We consider the free noncommutative algebra $\RX$ generated by the alphabet $X$:
it can be regarded as the
ring of noncommutative polynomials in the set of indeterminates $X$ with
coefficients in $\R$, where indeterminates commute with coefficients but
not with each other. The monomials are words
$x_1\dots x_n \in \langle{X}\rangle$, $x_i \in X$, including the empty
word $1$. Every polynomial $f \in \R\langle{X}\rangle$ has a unique
representation as a sum  
\[
 f=\sum_{m\in\langle{X}\rangle}c_mm
\]
with coefficients $c_m \in \R$, such that only finitely many coefficients
are nonzero, and its support is defined as
\[
 \supp(f):=\{m\in\langle{X}\rangle\ |\ c_m\neq0\},
\]
where $c_m$ are as above.

Recall that a quiver is a tuple $(V,E,s,t)$, where $V$ is a set of
vertices, $E$ is a set of edges, and $s,t:E\to V$ are source and target
maps, that are extend to all paths $p=e_n\cdots e_1$ by letting
$s(p)=s(e_1)$ and $t(p)=t(e_n)$. For every vertex $v\in V$, there is a
distinct path $\epsilon_v$ that starts and ends in $v$ without passing
through any edge, and which acts as a local identity on paths $p$, that
is $\epsilon_{t(p)}p=p=p\epsilon_{s(p)}$. A labelled quiver,
$Q=(V,E,X,s,t,l)$ is a quiver equipped with a label function $l:E\to X$
of edges into the alphabet $X$. We extend $l$ into a function from paths
to monomials by letting $l(p)=l(e_n)\cdots l(e_1)\in\langle{X}\rangle$,
and $l(\epsilon_v)=1$ is the empty word for every vertex $v$. From now
on, we fix a labelled quiver $Q=(V,E,X,s,t,l)$.

\begin{definition}
  Given a labelled quiver and a monomial $m$, we define the set of
  \emph{signatures} of $m$ as
 \[
 \sigma(m):=\{(s(p),t(p))\ |\ p\text{ a path in $Q$ with }l(p)=m\}
 \subseteq V\times{V}.
 \]
 A polynomial $f \in \R\langle{X}\rangle$ is said to be \emph{compatible}
 with $Q$ if its set of signatures $\sigma(f)$ is non empty, where:
 \[
  \sigma(f):=\bigcap_{m\in\supp(f)}\sigma(m) \subseteq V\times{V}.
  \]
  Finally, we denote by $s(f)$ and $t(f)$ the images of $\sigma(f)$ through the
  natural projections of $V\times V$ on $V$.
\end{definition}
Note that we have $\sigma(0)=V\times{V}$ and $\sigma(1)=\{(v,v)\ |\ v \in V\}$.

Computing with compatible polynomials does not always result in
compatible polynomials. However, under some conditions, the sum and
product of compatible polynomials are compatible as well. The following
properties of signatures are straightforward to prove; see also Lemmas
$10$ and $11$ in~\cite{Raab2019FormalPO}.

\begin{lemma}\label{lem:signatures}
  Let $f,g\in\RX$ be compatible with $Q$. Then,
  \begin{enumerate}
  \item If $\sigma(f)\cap\sigma(g)\neq\emptyset$, then $f+g$ is
    compatible with $Q$ and $\sigma(f+g)\supseteq\sigma(f)\cap\sigma(g)$.
  \item If $s(f)\cap t(g)\neq\emptyset$, then $fg$ is compatible with
    $Q$ and
    \begin{multline*}
      \hspace{0.5cm}\sigma(fg)\supseteq\{(u,w) \in s(g)\times{t(f)}\ |\ \\ \exists{v \in s(f) \cap t(g)}:
      (u,v) \in \sigma(g) \wedge (v,w) \in \sigma(f)\}.
    \end{multline*}
  \end{enumerate}
\end{lemma}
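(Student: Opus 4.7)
The plan is to reduce both statements to properties of the signature map on individual monomials, then lift via the definition $\sigma(h) = \bigcap_{m \in \supp(h)} \sigma(m)$.

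For part~1, the key observation is that $\supp(f+g) \subseteq \supp(f) \cup \supp(g)$ (cancellation can only remove monomials). Hence for every $m \in \supp(f+g)$, either $m \in \supp(f)$ or $m \in \supp(g)$, which gives $\sigma(m) \supseteq \sigma(f)$ or $\sigma(m) \supseteq \sigma(g)$. In both cases $\sigma(m) \supseteq \sigma(f) \cap \sigma(g)$. Intersecting over all $m \in \supp(f+g)$ yields $\sigma(f+g) \supseteq \sigma(f) \cap \sigma(g)$, and the hypothesis guarantees this intersection is nonempty, so $f+g$ is compatible.

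For part~2, I first establish the auxiliary formula for a single product of monomials:
\[
\sigma(m_1 m_2) = \{(u,w) \mid \exists v \in V : (u,v) \in \sigma(m_2) \wedge (v,w) \in \sigma(m_1)\}.
\]
This follows by noting that a path $p$ in $Q$ with $l(p) = m_1 m_2$ decomposes uniquely as a concatenation $p_1 p_2$ with $l(p_1)=m_1$, $l(p_2)=m_2$ and the common vertex $v = s(p_1) = t(p_2)$; conversely any such pair of paths meeting at a common vertex concatenates to a path labelled $m_1 m_2$. I would then use the expansion $fg = \sum_{m_1,m_2} c_{m_1} d_{m_2}\, m_1 m_2$ where $m_1$ ranges over $\supp(f)$ and $m_2$ over $\supp(g)$: any $m \in \supp(fg)$ admits at least one factorisation $m = m_1 m_2$ with $m_1 \in \supp(f)$, $m_2 \in \supp(g)$, since otherwise $m$ would have coefficient $0$. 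Given $(u,w)$ in the right-hand set of the lemma, pick the witness $v \in s(f) \cap t(g)$; from $\sigma(f) \subseteq \sigma(m_1)$ and $\sigma(g) \subseteq \sigma(m_2)$ the pair $(u,w)$ lies in $\sigma(m_1 m_2) = \sigma(m)$ by the auxiliary formula. Intersecting over $m \in \supp(fg)$ gives the claimed inclusion.

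Finally, to show compatibility of $fg$ I exhibit a concrete element of the right-hand set: pick any $v \in s(f) \cap t(g)$ from the hypothesis; then $v \in s(f)$ yields some $w$ with $(v,w) \in \sigma(f)$, and $v \in t(g)$ yields some $u$ with $(u,v) \in \sigma(g)$, so $(u,w)$ belongs to the set and hence to $\sigma(fg)$.

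The only mildly delicate point is handling potential cancellation: monomials that cancel when expanding $f+g$ or $fg$ simply disappear from the support, which is harmless for proving inclusions of the form $\sigma(\cdot) \supseteq \ldots$, since the intersection defining $\sigma(\cdot)$ is then over fewer sets. I therefore do not expect any real obstacle; the proof is essentially bookkeeping on top of the path-splitting observation, consistent with the remark in the text that the result is straightforward.
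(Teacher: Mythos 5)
Your proof is correct. The paper itself does not include a proof of this lemma---it is stated as ``straightforward to prove'' with a pointer to Lemmas~10 and~11 of the cited reference---and your argument (reducing to monomials via $\sigma(h)=\bigcap_{m\in\supp(h)}\sigma(m)$, using $\supp(f+g)\subseteq\supp(f)\cup\supp(g)$ for the sum, and path concatenation/splitting for the product, with cancellation only shrinking the support and hence enlarging the intersection) is exactly the standard argument being alluded to, including the correct handling of the edge cases $\sigma(0)=V\times V$ and empty-word factors.
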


We use the following conventions when we draw labelled quivers: we do not
give names to vertices and edges, but denote them by a bullet and an arrow
oriented from its source to its target, respectively,
and the label of an edge is simply written above the arrow representing
this edge.

\begin{example}\label{ex:compatible}
  Let us continue the running example. The Leibniz rule and invertibility
  for $H_1$ and $H_2$ and the fundamental theorem of calculus correspond
  to the following noncommutative polynomials in $\mathbb{Z}\Span{X}$, where
  $X=\{h_1,h_2,b_1,b_2,\tilde{h}_1,\tilde{h}_2,i,d\}$.
  \begin{gather*}
    f_1= d h_1-h_1d-b_1h_1,\qquad f_2 = d h_2-h_2d-b_2h_2,\\
    f_3 = h_1\tilde{h}_1-1,\qquad f_4 = h_2\tilde{h}_2-1,\\
    f_5=di-1
  \end{gather*}
  We collect these polynomials in the set $F:=\{f_1,\dots,f_5\}$.
  Notice that we represent the two integrals by a single indeterminate,
  so we only need one polynomial for the fundamental theorem of
  calculus. The claim corresponds to
  \[f:=(d-b_1)(d-b_2)h_2i\tilde{h}_2h_1i\tilde{h}_1-1.\]
  Since integration and differentiation decrease and increase 
  the regularity of functions, it is natural to consider the following
  labelled quiver with $3$ vertices (more details are given
  Section~\ref{sec:realizations}) with labels in the alphabet $X$.
  \begin{center}
    \begin{tikzpicture}
      \begin{scope}
        \matrix (m) [matrix of math nodes, column sep=3cm, row sep=1cm]
                {\bullet & \bullet & \bullet\\
                  \mbox{}&\mbox{}&\mbox{}\\};
                \path[->] (m-1-1) edge [bend left=25] node [above] {$d$} (m-1-2);
                \path[->] (m-1-2) edge [bend left=25] node [above] {$d$} (m-1-3);
                \path[->] (m-1-2) edge node [below] {$b_1$} (m-1-3);
                \path[->] (m-1-1) edge node [below] {$b_2$} (m-1-2);
                \path[->] (m-1-2) edge [bend left=30] node [below] {$i$} (m-1-1);
                \path[->] (m-1-3) edge [bend left=30] node [below] {$i$} (m-1-2);
                \path[->] (m-1-2) edge [out=110,in=75,looseness=10] node [above] {$h_1$} (m-1-2);
                \path[->] (m-1-3) edge [out=95,in=65,looseness=10] node [above] {$h_1$} (m-1-3);
                \path[->] (m-1-1) edge [out=110,in=75,looseness=10] node [above] {$h_2$} (m-1-1);
                \path[->] (m-1-2) edge [out=265,in=235,looseness=10] node [below] {$h_2$} (m-1-2);
                \path[->] (m-1-3) edge [out=265,in=295,looseness=10] node [below] {$\tilde{h}_1$} (m-1-3);
                \path[->] (m-1-2) edge [out=305,in=275,looseness=10] node [below] {$\tilde{h}_2$} (m-1-2);
      \end{scope}
    \end{tikzpicture}
  \end{center}
  Either directly or by the package, we check that $f$ and each element of $F$ is compatible with the quiver.
  Denoting the vertices from left to right by $v_1,v_2,v_3$, we obtain the following signatures.
  \begin{gather*}
    \sigma(f_1)=\{(v_2,v_3)\}, \qquad
    \sigma(f_2)=\{(v_1,v_2)\},\\
    \sigma(f_3)=\{(v_3,v_3)\}, \qquad
    \sigma(f_4)=\{(v_2,v_2)\},\\
    \sigma(f_5)=\{(v_2,v_2),(v_3,v_3)\},\\
    \sigma(f)=\{(v_3,v_3)\}
  \end{gather*}
  To determine $\sigma(f)$, for example, notice that $\sigma(h_2i\tilde{h}_2h_1i\tilde{h}_1)=\{(v_3,v_1)\}$
  and that $\sigma(dd)=\sigma(b_1d)=\sigma(db_2)=\sigma(b_1b_2)=\{(v_1,v_3)\}$ and
  recall that $\sigma(1)$ contains all pairs of the form $(v_i,v_i)$.
\end{example}

\section{Q-consequences}
\label{sec:Q-csq}

The following definition characterizes the situations when a representation
of the claim in terms of the assumptions is also valid in terms of
operators. This generalizes the notion of $Q$-consequence given
in~\cite{Raab2019FormalPO}. Throughout the section, we fix a labelled quiver
$Q$ with labels in a set $X$.

\begin{definition}\label{def:Q-consequences}
  A $Q$-\emph{consequence} of $F\subseteq\RX$ is a polynomial $f\in\RX$, compatible with $Q$,
  such that there exist $g_i\in F$, $a_i,b_i\in\RX$, $1\leq i\leq n$,
  such that
  \begin{equation}\label{equ:Q-decompo}
    f=\sum_{i=1}^na_ig_ib_i,
  \end{equation}
  and for every $(u,v)\in\sigma(f)$ and every $i$, there exist vertices 
  $u_i,v_i$ such that $(u,u_i)\in\sigma(b_i)$, $(u_i,v_i)\in\sigma(g_i)$
  and $(v_i,v)\in\sigma(a_i)$.
\end{definition}

The conditions on the signatures mean that there exist three paths in
the quiver as illustrated in the following diagram.
\begin{center}
  \begin{tikzcd}
    u\ar[r, twoheadrightarrow, "f"]\ar[d, twoheadrightarrow, "b_i"' near start] & v\\
    u_i\ar[r, twoheadrightarrow, "g_i"'] & v_i\ar[u, twoheadrightarrow, "a_i"' near start]
  \end{tikzcd}
\end{center}

Proving that a given representation~\eqref{equ:Q-decompo} satisfies the
required conditions of the above definition is straightforward. In
Proposition~\ref{prop:Q-consequence_criterion}, we give an alternative
criterion for $Q$-consequences. This criterion will play an important
role later in Section~\ref{sec:general_rew_systems} on rewriting. Before,
we need the following lemma.

\begin{lemma}\label{lem:increasing_signatures}  
  Let $m\in\Span{X}$ be a monomial and $g\in\RX$ be a polynomial such
  that $\sigma(m)\subseteq\sigma(g)$.
  Then, for all monomials $a,b\in\Span{X}$, we have $\sigma(amb)\subseteq\sigma(agb)$.
  Moreover,
  for every $(u,v)\in\sigma(amb)$, there exist two vertices
  $\tilde{u},\tilde{v}$ such that $(u,\tilde{u})\in\sigma(b)$,
  $(\tilde{u},\tilde{v})\in\sigma(g)$, and $(\tilde{v},v)\in\sigma(a)$.
\end{lemma}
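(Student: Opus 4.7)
The plan is to unpack both $\sigma(amb)$ and $\sigma(agb)$ directly from the definitions, exploit the fact that monomial multiplication by $a$ and $b$ on a polynomial $g$ does not cause cancellations, and then use the hypothesis $\sigma(m)\subseteq\sigma(g)$ pointwise on each monomial occurring in $g$.

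More concretely, I start by fixing $(u,v)\in\sigma(amb)$. Since $amb$ is a single monomial, the signature condition provides a path $p$ in $Q$ from $u$ to $v$ with $l(p)=amb$. Because paths in a quiver split uniquely along any factorization of their label, I can write $p=p_a\,p_m\,p_b$ with $l(p_a)=a$, $l(p_m)=m$, $l(p_b)=b$. Setting $\tilde u := s(p_m)=t(p_b)$ and $\tilde v := t(p_m)=s(p_a)$ gives
\[
(u,\tilde u)\in\sigma(b),\qquad (\tilde u,\tilde v)\in\sigma(m),\qquad (\tilde v,v)\in\sigma(a).
\]
Applying the hypothesis $\sigma(m)\subseteq\sigma(g)$ to the middle relation immediately yields the ``moreover'' statement.

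For the inclusion $\sigma(amb)\subseteq\sigma(agb)$, I first note that since $a$ and $b$ are monomials, the map $m'\mapsto am'b$ is injective on $\supp(g)$, so no cancellations occur and $\supp(agb)=\{am'b\mid m'\in\supp(g)\}$. By the definition of the signature of a polynomial,
\[
\sigma(agb)=\bigcap_{m'\in\supp(g)}\sigma(am'b).
\]
Fix an arbitrary $m'\in\supp(g)$. Since $(\tilde u,\tilde v)\in\sigma(g)\subseteq\sigma(m')$, there is a path $q_{m'}$ from $\tilde u$ to $\tilde v$ with $l(q_{m'})=m'$. Concatenating $p_a\,q_{m'}\,p_b$ produces a path from $u$ to $v$ with label $am'b$, so $(u,v)\in\sigma(am'b)$. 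As $m'$ was arbitrary, $(u,v)\in\sigma(agb)$.

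There is no serious obstacle here: the argument is essentially a bookkeeping exercise in cutting and pasting paths according to the factorization of the label. The only point that needs to be flagged explicitly is the absence of cancellations in $agb$, which ensures that the intersection defining $\sigma(agb)$ is taken over exactly the set $\{am'b : m'\in\supp(g)\}$ and not some possibly smaller set.
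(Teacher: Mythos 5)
Your proof is correct and follows essentially the same route as the paper's: split the path realizing $(u,v)\in\sigma(amb)$ along the factorization of the label, use $\sigma(m)\subseteq\sigma(g)$ to replace the middle segment by a path labelled by any $\tilde{m}\in\supp(g)$, and reassemble. The only difference is that you explicitly justify $\supp(agb)=\{am'b\mid m'\in\supp(g)\}$ via non-cancellation, a point the paper leaves implicit; that is a welcome bit of extra care but not a change of method.
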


\begin{proof}
  For every $(u,v)\in\sigma(amb)$, there exists a path from $u$ to $v$
  with label $amb$. We split this path in 3 parts: the first part $\beta$
  has label $b$, the third part $\alpha$ has label $a$, and the second
  part has label $m$. Since $\sigma(m)\subseteq\sigma(g)$, for every
  $\tilde{m}\in\supp(g)$, there also exists a path $\gamma$ from
  $\tilde{u}:=t(\beta)$ to $\tilde{v}:=s(\alpha)$ with label $\tilde{m}$,
  as pictured on the following diagram
    \begin{center}
      \begin{tikzcd}
        u\ar[r, "b"] & \tilde{u} \ar[r, "m", bend left]
        \ar[r, "\tilde{m}"', bend right] & \tilde{v} \ar[r, "a"] & v
      \end{tikzcd}
    \end{center}
  Hence, $a\tilde{m}b$ is the label of $\alpha\gamma\beta$. Consequently,
  $\sigma(amb)\subseteq\sigma(a\tilde{m}b)$ for every
  $\tilde{m}\in\supp(g)$, and $(\tilde{u},\tilde{v})\in\sigma(g)$.
\end{proof}

\begin{proposition}\label{prop:Q-consequence_criterion}
  Let $F\subseteq\RX$ be a set of polynomials such that for every $g\in F$,
  there exists $m_g\in\supp(g)$ such that $\sigma(m_g)\subseteq\sigma(g)$.
  Let $f\in\RX$ be a compatible polynomial such that there exist
  $\lambda_i\in R$, $g_i\in F$, $a_i,b_i\in\Span{X}$, $1\leq i\leq n$,
  such that
  \begin{equation}\label{equ:Q-decompo_prop}
    f=\sum_{i=1}^n\lambda_ia_ig_ib_i,
  \end{equation}
  and for each $i$, we have $\sigma(f)\subseteq\sigma(a_im_{g_i}b_i)$.
  Then, $f$ is a $Q$-con\-se\-quence of $F$.
\end{proposition}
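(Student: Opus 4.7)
The plan is to reduce the statement directly to Lemma~\ref{lem:increasing_signatures} applied termwise, after a harmless cosmetic adjustment to bring the given representation into the exact form required by Definition~\ref{def:Q-consequences}. Specifically, the definition allows $a_i, b_i \in \RX$ to be arbitrary polynomials (and no scalar $\lambda_i$ appears), whereas the proposition provides monomials $a_i, b_i \in \Span{X}$ together with coefficients $\lambda_i \in R$. To align both, I would first discard any indices with $\lambda_i = 0$ and then absorb each remaining scalar by setting $\tilde{a}_i := \lambda_i a_i \in \RX$, so that $f = \sum_{i=1}^n \tilde{a}_i g_i b_i$. Since $\lambda_i \neq 0$, we have $\supp(\tilde{a}_i) = \{a_i\}$ and hence $\sigma(\tilde{a}_i) = \sigma(a_i)$, so signature information is preserved.

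Next, I would fix an index $i$ and invoke Lemma~\ref{lem:increasing_signatures} with monomial $m := m_{g_i}$, polynomial $g := g_i$, and bracketing monomials $a := a_i$, $b := b_i$. The hypothesis $\sigma(m_{g_i}) \subseteq \sigma(g_i)$ is exactly what the lemma requires. Its conclusion yields $\sigma(a_i m_{g_i} b_i) \subseteq \sigma(a_i g_i b_i)$ and, more importantly, for every pair $(u, v) \in \sigma(a_i m_{g_i} b_i)$ it produces vertices $u_i, v_i$ with $(u, u_i) \in \sigma(b_i)$, $(u_i, v_i) \in \sigma(g_i)$, and $(v_i, v) \in \sigma(a_i)$.

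To finish, I would combine this with the remaining hypothesis $\sigma(f) \subseteq \sigma(a_i m_{g_i} b_i)$: any $(u,v) \in \sigma(f)$ automatically lies in $\sigma(a_i m_{g_i} b_i)$, so the vertices $u_i, v_i$ extracted from the lemma satisfy $(u, u_i) \in \sigma(b_i)$, $(u_i, v_i) \in \sigma(g_i)$, and $(v_i, v) \in \sigma(a_i) = \sigma(\tilde{a}_i)$. This is precisely the condition demanded by Definition~\ref{def:Q-consequences} for the decomposition $f = \sum_i \tilde{a}_i g_i b_i$, so $f$ is a $Q$-consequence of $F$.

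There is no genuine obstacle here: the proof is essentially a bookkeeping exercise that packages Lemma~\ref{lem:increasing_signatures} in the form of Definition~\ref{def:Q-consequences}. The only conceptual point worth pausing on is why the hypothesis $\sigma(m_g) \subseteq \sigma(g)$ on a single monomial $m_g$ suffices: once this holds, the lemma shows that the monomial $a_i m_{g_i} b_i$ already "witnesses" the signatures of $a_i g_i b_i$, so that checking $\sigma(f) \subseteq \sigma(a_i m_{g_i} b_i)$ (a condition that involves only monomials, hence is easy to verify in practice and in particular central for the rewriting applications of Section~\ref{sec:general_rew_systems}) is enough to guarantee the path-wise compatibility demanded by the definition.
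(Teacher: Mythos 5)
Your proposal is correct and follows essentially the same route as the paper: apply Lemma~\ref{lem:increasing_signatures} termwise with $m=m_{g_i}$ and combine its conclusion with the hypothesis $\sigma(f)\subseteq\sigma(a_im_{g_i}b_i)$ to exhibit the vertices required by Definition~\ref{def:Q-consequences}. The only difference is that you make explicit the harmless bookkeeping of absorbing the scalars $\lambda_i$ into the left cofactors, which the paper leaves implicit.
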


\begin{proof}
  By hypotheses, $f$ is compatible and for every $(u,v)\in\sigma(f)$ and
  for every $1\leq i\leq n$, we have $(u,v)\in\sigma(a_im_{g_i}b_i)$. Hence,
  using the hypothesis $\sigma(m_{g_i})\subseteq\sigma(g_i)$, from
  Lemma~\ref{lem:increasing_signatures}, there exist vertices $u_i$ and
  $v_i$ such that $(u,u_i)\in\sigma(b_i)$, $(u_i,v_i)\in\sigma(g_i)$ and
  $(v_i,v)\in\sigma(a_i)$. As a consequence, $f$ is a $Q$-consequence 
  of~$F$.
\end{proof}

Note that if for $m_g\in\supp(g)$, we have $\sigma(m_g)\subseteq\sigma(g)$,
then $\sigma(m_g)=\sigma(g)$ holds by definition.

\begin{example}\label{ex:Q-csq}
  Let us continue Example~\ref{ex:compatible}.
  We show that $f$ is a $Q$-con\-se\-quence of $F$ by considering the following representation:
  \begin{multline}\label{equ:Q-csq_decompo}
    f = f_1i\tilde{h}_1 + (d-b_1)f_2i\tilde{h}_2h_1i\tilde{h}_1 + f_3 +
    (d-b_1)f_4h_1i\tilde{h}_1\\
    +(d-b_1)h_2f_5\tilde{h}_2h_1i\tilde{h}_1 + h_1f_5\tilde{h}_1.
  \end{multline}
  Such a representation can be obtained with the package
  by tracking cofactors in polynomial reduction w.r.t.\ a monomial order.
  Here, we consider a degree-lexicographic order such that $d$ is greater
  than $h_i$'s and $b_i$'s. Then, $f$ can be reduced to zero using $F$,
  which gives~\eqref{equ:Q-csq_decompo}. Now, we have to check
  assumptions on signatures, either by checking
  Definition~\ref{def:Q-consequences} or the assumptions of
  Proposition~\ref{prop:Q-consequence_criterion}, both options are implemented in the package.
  For applying Proposition~\ref{prop:Q-consequence_criterion} by hand, we can choose $m_{f_1}=dh_1,m_{f_2}=h_2d,m_{f_3}=h_1\tilde{h}_1,m_{f_4}=h_2\tilde{h}_2$, and $m_{f_5}=di$, which satisfy $m_{f_i}\in\supp(f_i)$ and $\sigma(m_{f_i})=\sigma(f_i)$.
  Expanding~\eqref{equ:Q-csq_decompo} in the form~\eqref{equ:Q-decompo_prop}, we may check that $\sigma(a_im_{g_i}b_i)=\{(v_3,v_3)\}=\sigma(f)$ for every summand in the representation \eqref{equ:Q-decompo_prop}, which proves that $f$ is a $Q$-consequence of $F$.
\end{example}

To conclude this section, we prove that the property of being a $Q$-consequence is transitive, which we will exploit in Section~\ref{sec:compatible_GB}.

\begin{theorem}\label{thm:transitive}
  Let $F,G \subseteq \RX$ be sets of polynomials such that each element
  of $G$ is a $Q$-consequence of $F$. Then, any $Q$-consequence of $G$ is
  also a $Q$-consequence of $F$.
\end{theorem}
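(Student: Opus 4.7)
The plan is to prove this by substitution, composing the two levels of representations and stitching together the corresponding paths in the quiver via Lemma~\ref{lem:signatures}.

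First, let $f$ be a $Q$-consequence of $G$. By Definition~\ref{def:Q-consequences}, $f$ is compatible with $Q$ and there is a representation
\[
  f=\sum_{i=1}^n a_i g_i b_i,\qquad g_i\in G,\ a_i,b_i\in\RX,
\]
such that for each $(u,v)\in\sigma(f)$ and each $i$ there exist vertices $u_i,v_i$ with $(u,u_i)\in\sigma(b_i)$, $(u_i,v_i)\in\sigma(g_i)$, and $(v_i,v)\in\sigma(a_i)$. Since each $g_i$ is itself a $Q$-consequence of $F$, I obtain similarly
\[
  g_i=\sum_{j=1}^{n_i} a_{ij} h_{ij} b_{ij},\qquad h_{ij}\in F,\ a_{ij},b_{ij}\in\RX,
\]
together with the appropriate signature witnesses for each $(u',v')\in\sigma(g_i)$.

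Next, I would substitute to obtain the candidate representation
\[
  f=\sum_{i=1}^{n}\sum_{j=1}^{n_i}(a_i a_{ij})\,h_{ij}\,(b_{ij} b_i),
\]
which expresses $f$ as a combination of elements of $F$ with cofactors in $\RX$. What remains is to verify the signature condition from Definition~\ref{def:Q-consequences} for this new representation. Fix $(u,v)\in\sigma(f)$ and indices $i,j$. From the outer representation, pick $u_i,v_i$ with $(u,u_i)\in\sigma(b_i)$, $(u_i,v_i)\in\sigma(g_i)$, $(v_i,v)\in\sigma(a_i)$. Applying the witness property for $g_i$ to the pair $(u_i,v_i)\in\sigma(g_i)$, I get vertices $u_{ij},v_{ij}$ with $(u_i,u_{ij})\in\sigma(b_{ij})$, $(u_{ij},v_{ij})\in\sigma(h_{ij})$, $(v_{ij},v_i)\in\sigma(a_{ij})$.

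Now the intermediate vertex $u_i$ witnesses $u_i\in s(b_{ij})\cap t(b_i)$, so by Lemma~\ref{lem:signatures}(2) the product $b_{ij}b_i$ is compatible with $Q$ and $(u,u_{ij})\in\sigma(b_{ij}b_i)$; symmetrically, $v_i\in s(a_i)\cap t(a_{ij})$ yields $(v_{ij},v)\in\sigma(a_ia_{ij})$. Together with $(u_{ij},v_{ij})\in\sigma(h_{ij})$, this is exactly the witness required by Definition~\ref{def:Q-consequences} for the pair $(u,v)$ and the summand indexed by $(i,j)$. Hence $f$ is a $Q$-consequence of $F$. The main potential obstacle is really bookkeeping rather than a mathematical difficulty: one has to make sure that the vertex $u_i$ (respectively $v_i$) extracted from the outer representation is the \emph{same} vertex used as the target (respectively source) of the inner witnesses, so that the concatenation step via Lemma~\ref{lem:signatures} applies; this is automatic once the inner witnesses are chosen with the specific pair $(u_i,v_i)\in\sigma(g_i)$ that came from the outer step.
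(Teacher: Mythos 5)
Your proof is correct and follows essentially the same route as the paper's: substitute the inner representations of each $g_i$ into the outer one, then glue the signature witnesses at the intermediate vertices $u_i$ and $v_i$ using Point 2 of Lemma~\ref{lem:signatures} to obtain $(u,u_{ij})\in\sigma(b_{ij}b_i)$ and $(v_{ij},v)\in\sigma(a_ia_{ij})$. The bookkeeping point you flag at the end is exactly the step the paper also relies on, and you handle it correctly.
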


\begin{proof}
  Let $h$ be a $Q$-consequence of $G$, so that it is compatible with $Q$.
  Moreover, $h=\sum_ia_ig_ib_i$, with $g_i\in G$ and $a_i,b_i\in\RX$ such
  that for every $(u,v)\in\sigma(h)$ and every $i$, there exist
  vertices $u_i,v_i$ such that $(u,u_i)\in\sigma(b_i)$,
  $(u_i,v_i)\in\sigma(g_i)$ and $(v_i,v)\in\sigma(a_i)$. Since
  every element of $G$ is a $Q$-consequence of $F$, for each $g_i$, there
  exist $a_{i,j},b_{i,j}\in\RX$ and $f_{i,j}\in F$ such that
  $g_i=\sum_ja_{i,j}f_{i,j}b_{i,j}$ and for every
  $(u_i,v_i)\in\sigma(g_i)$ and every $j$, there exist
  $(u_{i,j},v_{i,j})\in\sigma(f_{i,j})$ such that
  $(u_i,u_{i,j})\in\sigma(b_{i,j})$ and
  $(v_{i,j},v_i)\in\sigma(a_{i,j})$. All together, we have
  \[h=\sum_i\sum_ja_ia_{i,j}f_{i,j}b_{i,j}b_i.\]
  For every $j$, $u_i$ and $v_i$ belong to $s(b_{i,j})\cap t(b_i)$ and
  $s(a_i)\cap t(a_{i,j})$, respectively, so that from Point 2 of
  Lemma~\ref{lem:signatures}, $(u,u_{i,j})\in\sigma(b_{i,j}b_i)$ and
  $(v_{i,j},v)\in\sigma(a_ia_{i,j})$, respectively. Hence, $h$ is a
  $Q$-consequence of~$F$.
\end{proof}

\section{Realizations}
\label{sec:realizations}

In this section, we formalize the translation of polynomials to
operators by substituting variables by basic operators. In particular, we
show in Theorem~\ref{thm:realizations} that being a $Q$-consequence is
enough to ensure that the corresponding operator identity can be inferred
from the assumed operator identities. To this end, we summarize the
relevant notions and basic facts from~\cite[Section 5]{Raab2019FormalPO}.

For a quiver $(V,E,s,t)$ and a ring $R$, $(\mathcal{M},\varphi)$ is
called a \emph{representation} of the quiver $(V,E,s,t)$, if
$\mathcal{M}=(\mathcal{M}_v)_{v\in V}$ is a family of $R$-modules and
$\varphi$ is a map that assigns to each $e\in E$ a $R$-linear map
$\varphi(e):\mathcal{M}_{s(e)}\to\mathcal{M}_{t(e)}$, see e.g.\
\cite{DerksenWeyman2005, GothenKing2005}. Not that any nonempty path $e_n{\dots}e_1$ in
the quiver induces a $R$-linear map
$\varphi(e_n){\cdot}{\dots}{\cdot}\varphi(e_1)$, since the maps
$\varphi(e_{i+1})$ and $\varphi(e_i)$ can be composed for every
$i \in \{1,\dots,n-1\}$ by definition of $\varphi$. Similarly, for every
$v \in V$, the empty path $\epsilon_v$ induces the identity map on
$\mathcal{M}_v$.

\begin{remark}\label{rem:categories}
 All notions and results of this section naturally generalize to $R$-linear
categories by considering objects and morphisms in such a category
instead of $R$-modules and $R$-linear maps, respectively.
 For more details, see Section~5.2 in \cite{Raab2019FormalPO}.
\end{remark}

\begin{definition}
  Let $R$ be a ring and let $Q$ be a labelled quiver with labelling $l$.
  We call a representation $(\mathcal{M},\varphi)$ of $Q$
  \emph{consistent} with the labelling $l$ if for any two nonempty paths
  $p=e_n{\dots}e_1$ and $q=d_n{\dots}d_1$ in $Q$ with the same source and
  target, equality of labels $l(p)=l(q)$ implies $\varphi(e_n){\cdot}
  {\dots}{\cdot}\varphi(e_1) = \varphi(d_n){\cdot}{\dots}{\cdot}
  \varphi(d_1)$ as $R$-linear maps.
\end{definition}

\begin{remark}
  If all paths with the same source and target have distinct labels, then
  every representation of that labelled quiver is consistent
  with its labelling. In particular, this holds if for every vertex all
  outgoing edges have distinct labels or analogously for incoming edges.
  These sufficient conditions can be verified without the need for 
  considering all possible paths.
\end{remark}

For Definition~\ref{def:realization} and
Lemma~\ref{lem:productrealizationsK}, we fix a ring $R$, a labelled
quiver $Q=(V,E,X,s,t,l)$ and a consistent representation
$\mathcal{R}=(\mathcal{M},\varphi)$ of $Q$. In order to define
realizations of a polynomial, we first need to introduce some notations.
Given two vertices $v,w$, we write $\RX_{v,w}$ for the set of polynomials
$f\in\RX$ such that $(v,w)\in\sigma(f)$. From Point 1 of
Lemma~\ref{lem:signatures}, $\RX_{v,w}$ is a module, and it is clear
that this module is free with basis the set of monomials $m$ such that
$(v,w)\in\sigma(m)$. We also denote by $\Hom_R(\mathcal{M}_v,\mathcal{M}_w)$
the set of $R$-linear maps from $\mathcal{M}_v$ to $\mathcal{M}_w$.

\begin{definition}\label{def:realization}
  For $v,w \in V$, we define the $R$-linear map
  $\varphi_{v,w}:\RX_{v,w}\to\Hom_R(\mathcal{M}_v,\mathcal{M}_w)$ by
 \[
 \varphi_{v,w}(l(e_n{\dots}e_1)) := \varphi(e_n)
        {\cdot}{\dots}{\cdot}\varphi(e_1)
 \]
 for all nonempty paths $e_n{\dots}e_1$ in $Q$ from $v$ to $w$ and, if
 $v=w$, also by $\varphi_{v,v}(1) := \id_{\mathcal{M}_v}$. For all
 $f\in\RX_{v,w}$, we call the $R$-linear map $\varphi_{v,w}(f)$ a
 \emph{realization} of $f$ w.r.t.\ the representation $\mathcal{R}$ of
 $Q$.
\end{definition}

Notice that the map $\varphi_{v,w}$ is well-defined since, by consistency
of $\mathcal{R}$, for every monomial $m\in\RX_{v,w}$, its realization $\varphi_{v,w}(m)$
does not depend on the path from $v$ to $w$ with label $m$.

In the proof of Theorem~\ref{thm:realizations}, we use an intermediate
result given in~\cite[Lemma 31]{Raab2019FormalPO}, whose statement is the
following.

\begin{lemma}\label{lem:productrealizationsK}
  Let $u,v,w \in V$. Then, for all $f\in\RX_{v,w}$ and $g\in\RX_{u,v}$, we
  have that $fg\in\RX_{u,w}$ and
 \[
  \varphi_{u,w}(fg) = \varphi_{v,w}(f){\cdot}\varphi_{u,v}(g).
 \]
\end{lemma}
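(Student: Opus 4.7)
The plan is to split the proof into two stages: a signature check showing $fg\in\RX_{u,w}$, followed by verification of the realization identity which I reduce to the case of monomials.

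First, I would observe that $fg \in \RX_{u,w}$ is a direct application of Lemma~\ref{lem:signatures}(2): since $(v,w)\in\sigma(f)$ and $(u,v)\in\sigma(g)$, we have $v\in s(f)\cap t(g)$, so $fg$ is compatible and the set-theoretic description given in the lemma guarantees $(u,w)\in\sigma(fg)$. This justifies that $\varphi_{u,w}(fg)$ is even defined.

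For the identity itself, the natural move is to reduce to monomials by linearity. Since $(v,w)\in\sigma(f)=\bigcap_{m\in\supp(f)}\sigma(m)$, every $m\in\supp(f)$ lies in $\RX_{v,w}$, and similarly every $n\in\supp(g)$ lies in $\RX_{u,v}$. Writing $f=\sum c_m m$ and $g=\sum d_n n$, expanding $fg=\sum c_m d_n (mn)$ and invoking $R$-linearity of the three maps $\varphi_{u,w},\varphi_{v,w},\varphi_{u,v}$ (which uses that $\RX_{u,w}$, $\RX_{v,w}$, $\RX_{u,v}$ are free $R$-modules on the corresponding monomials) reduces the claim to the case where $f=m$ and $g=n$ are both monomials.

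For monomials, I would pick a path $p$ from $v$ to $w$ with $l(p)=m$ and a path $q$ from $u$ to $v$ with $l(q)=n$. Then the concatenation $pq$ is a path from $u$ to $w$ with label $mn$. When both $p$ and $q$ are nonempty, Definition~\ref{def:realization} directly gives $\varphi_{u,w}(mn)$ as the composition of the $\varphi$-images of the edges of $pq$, which factors as the composition of the images of the edges of $p$ after those of $q$; that is exactly $\varphi_{v,w}(m)\cdot\varphi_{u,v}(n)$. Well-definedness of $\varphi_{u,w}(mn)$ (independence of the chosen path with that label) is ensured by consistency of $\mathcal{R}$, so the particular choice $pq$ is harmless.

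The only subtlety, and what I expect to be the main bookkeeping obstacle, is the empty-path case: if $m=1$ then $v=w$ (since $(v,w)\in\sigma(1)$ forces this), $mn=n$, and I need to match $\varphi_{v,v}(1)=\id_{\mathcal{M}_v}$ with the identity on the left of $\varphi_{u,v}(n)$; symmetrically if $n=1$; and trivially if both are empty. These three edge cases each reduce to the definition $\varphi_{v,v}(1):=\id_{\mathcal{M}_v}$, so they unfold routinely, but they must be treated separately because the general nonempty-path formula does not literally apply to an empty concatenation.
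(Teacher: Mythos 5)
The paper does not prove this lemma itself---it imports it verbatim from Lemma~31 of \cite{Raab2019FormalPO}---so there is no in-paper argument to compare against; your proof is correct and complete as written. The structure (deducing $fg\in\RX_{u,w}$ from Lemma~\ref{lem:signatures}(2), reducing to monomials by freeness of $\RX_{v,w}$ and $R$-linearity of the maps $\varphi_{v,w}$, concatenating paths and appealing to consistency for well-definedness, and treating the empty-word cases via $\varphi_{v,v}(1)=\id_{\mathcal{M}_v}$ separately) is exactly the natural argument the cited reference supplies.
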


\begin{theorem}\label{thm:realizations}
  Let $F\subseteq\RX$ be a set of polynomials and let $Q$ be a labelled quiver with labels in $X$.
  If a polynomial $f\in\RX$ is a $Q$-consequence of $F$, then for all consistent representations of the quiver $Q$ such that all realizations of all elements of $F$ are zero, all realizations
  of $f$ are zero.
\end{theorem}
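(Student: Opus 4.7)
The plan is to reduce the statement to a single signature at a time: it suffices to show that for every $(u,v) \in \sigma(f)$, the realization $\varphi_{u,v}(f)$ is the zero map from $\mathcal{M}_u$ to $\mathcal{M}_v$. So I fix a consistent representation $(\mathcal{M},\varphi)$ of $Q$ in which $\varphi_{p,q}(g)=0$ for every $g\in F$ and every $(p,q)\in\sigma(g)$, fix some $(u,v)\in\sigma(f)$, and aim to show $\varphi_{u,v}(f)=0$. Note that $(u,v)\in\sigma(f)$ already ensures $f\in\RX_{u,v}$, so $\varphi_{u,v}(f)$ is defined.

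Now I would unpack the definition of $Q$-consequence at this chosen $(u,v)$. By Definition~\ref{def:Q-consequences}, we have a representation $f=\sum_{i=1}^{n}a_i g_i b_i$ with $g_i\in F$ and $a_i,b_i\in\RX$, and, for this particular $(u,v)$, there are vertices $u_i,v_i$ such that $(u,u_i)\in\sigma(b_i)$, $(u_i,v_i)\in\sigma(g_i)$, $(v_i,v)\in\sigma(a_i)$. Because the signature of a polynomial is by definition the intersection of the signatures of its support monomials, these three conditions are equivalent to the memberships
\[
b_i\in\RX_{u,u_i},\qquad g_i\in\RX_{u_i,v_i},\qquad a_i\in\RX_{v_i,v}.
\]
This is the key alignment: the source/target data of the three factors are now matched up correctly to make Lemma~\ref{lem:productrealizationsK} applicable.

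Applying Lemma~\ref{lem:productrealizationsK} twice, first to $g_i b_i$ and then to $a_i(g_i b_i)$, I obtain $a_i g_i b_i\in\RX_{u,v}$ together with the factorization
\[
\varphi_{u,v}(a_i g_i b_i)=\varphi_{v_i,v}(a_i)\cdot\varphi_{u_i,v_i}(g_i)\cdot\varphi_{u,u_i}(b_i).
\]
Since $g_i\in F$ and $(u_i,v_i)\in\sigma(g_i)$, the standing hypothesis on the representation gives $\varphi_{u_i,v_i}(g_i)=0$, so every summand vanishes as an $R$-linear map $\mathcal{M}_u\to\mathcal{M}_v$. Finally, all the products $a_i g_i b_i$ lie in the submodule $\RX_{u,v}$ of $\RX$, so the identity $f=\sum_i a_i g_i b_i$ holds inside $\RX_{u,v}$, and $R$-linearity of $\varphi_{u,v}$ yields $\varphi_{u,v}(f)=\sum_i\varphi_{u,v}(a_i g_i b_i)=0$.

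I do not expect a genuine obstacle: the proof is essentially bookkeeping that turns the abstract signature conditions of Definition~\ref{def:Q-consequences} into concrete module memberships compatible with the domain of $\varphi_{u,v}$. The only point that deserves emphasis is that the vertices $u_i,v_i$ may depend on the chosen $(u,v)$; this is harmless because the argument is carried out pointwise for each $(u,v)\in\sigma(f)$, which is exactly how Definition~\ref{def:realization} quantifies over ``all realizations of $f$''.
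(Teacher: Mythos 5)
Your proof is correct and follows essentially the same route as the paper: fix $(u,v)\in\sigma(f)$, use the signature data from Definition~\ref{def:Q-consequences} to place $a_i,g_i,b_i$ in the appropriate modules $\RX_{v_i,v}$, $\RX_{u_i,v_i}$, $\RX_{u,u_i}$, apply Lemma~\ref{lem:productrealizationsK} and linearity of $\varphi_{u,v}$, and conclude since $\varphi_{u_i,v_i}(g_i)=0$. The extra bookkeeping you spell out (the equivalence of the signature conditions with module membership, and the pointwise dependence of $u_i,v_i$ on $(u,v)$) is implicit in the paper's argument but not a deviation from it.
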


\begin{proof}
  Assume that $f$ is a $Q$-consequence, so that it is compatible with $Q$
  and it can be written in the form $\sum a_ig_ib_i$, such that for each
  $(u,v)\in\sigma(f)$ and each $i$, there exist vertices $u_i,v_i$ such
  that $(u,u_i)\in\sigma(b_i)$, $(u_i,v_i)\in\sigma(g_i)$ and
  $(v_i,v)\in\sigma(a_i)$. Let us fix a consistent representation
  $\mathcal{R}=(\mathcal{M},\varphi)$ of $Q$. By linearity of
  $\varphi_{u,v}$ and from Lemma~\ref{lem:productrealizationsK}, we have
  \[\varphi_{u,v}(f)=\sum\varphi_{u,v}(a_ig_ib_i)=
  \sum\varphi_{v_i,v}(a_i){\cdot}\varphi_{u_i,v_i}(g_i){\cdot}
  \varphi_{u,u_i}(b_i).\]
  Hence, if all realizations of all elements of $F$ are zero, then 
  $\varphi_{u,v}(f)=0$, which means that all realizations of $f$ w.r.t
  $\mathcal{R}$ are zero.
\end{proof}

\begin{example}\label{ex:algebraic_proof_of_the_general_solution}
  We finish our proof of \eqref{ex:inhom_ODE} by considering certain representations of
  the quiver of Example~\ref{ex:compatible}. For a nonnegative integer $k$ and an open
  interval $I\subseteq\mathbb{R}$, we assign the spaces $C^k(I)$, $C^{k+1}(I)$,
  and $C^{k+2}(I)$ to the the vertices from right to left. Hence, differentiation
  and integration induce operators $\partial:C^{k+1}(I)\to C^k(I),
  \partial:C^{k+2}(I)\to C^{k+1}(I),\int_1:C^k(I)\to C^{k+1}(I)$, and
  $\int_2:C^{k+1}(I)\to C^{k+2}(I)$. We also assume the following regularity
  of functions: $B_1$ is $C^k$, $H_1$ and $B_2$ are
  $C^{k+1}$ and $H_2$ is $C^{k+2}$ on $I$. Then, the natural representation associated
  with these operators is consistent. Moreover, we have seen in Example~\ref{ex:Q-csq}
  that $f$ is a $Q$-consequence of $F$. Since all realizations of
  $f_i$'s are zero, by Theorem~\ref{thm:realizations}, all realizations of $f$
  are zero. In particular, for every nonnegative integer $k$ and every
  $r(x)\in C^k(I)$, the function $y(x)$ defined by~\eqref{equ:sol} is a solution of the
  inhomogeneous differential equation~\eqref{ex:inhom_ODE}.

  Instead of considering scalar differential equations we could consider differential
  systems of the form~\eqref{ex:inhom_ODE} for vector-valued functions $y(x)$ of
  arbitrary dimension $n$. More explicitly, we can also consider coefficients
  $B_1(x),B_2(x)$ as $n\times n$ matrices, $r(x)$ as a vector of dimension $n$, and
  $H_1(x)$ and $H_2(x)$ as fundamental matrix solutions of the homogeneous systems
  $y^\prime(x)-B_i(x)y(x)=0$. We still obtain consistent representations of the quiver
  where the vertices are mapped to $C^k(I)^n, C^{k+1}(I)^n$ and $C^{k+2}(I)^n$,
  respectively. Then, Theorem~\ref{thm:realizations} immediately proves that
  the function $y(x)$ defined by~\eqref{equ:sol} is a solution of the
  inhomogeneous differential equation~\eqref{ex:inhom_ODE}.
  Similarly, analogous statements for other suitable functional spaces can be proven just by choosing different representations of the quiver.
\end{example}

\section{Compatible rewriting}
\label{sec:general_rew_systems}

In this section, we give conditions on polynomials such that rewriting 
to zero of a compatible polynomial by them proves that it is a $Q$-consequence. First,
we recall from~\cite[Definition 2]{Raab2019FormalPO} a general notion of rewriting
one polynomial by another in terms of an arbitrary monomial division.
Notice that the standard polynomial reduction is a particular case, where
$m$ is the leading monomial of $g$ w.r.t.\ a monomial order and $\lambda$ is such
that $amb$ is cancelled in~\eqref{equ:rew_candidate}. 

\begin{definition}
  Let $g\in\RX$ be a polynomial and let $m\in\supp(g)$. Let $f\in\RX$ be
  a polynomial such that $m$ divides some monomial $m_f\in\supp(f)$,
  i.e., $m_f=amb$ for monomials $a,b\in\Span{X}$. For every
  $\lambda\in R$, we say that $f$ {\em can be rewritten} to
  \begin{equation}\label{equ:rew_candidate}
    h:=f+\lambda agb,
  \end{equation}
  using $(g,m)$.
\end{definition}

We fix a labelled quiver $Q$ with labels in $X$. It turns out that to obtain
$Q$-consequences using rewriting (Theorem~\ref{thm:consequences}), we  
need to choose suitable divisor monomials such that signatures only
increase. In particular, this is the case when divisor monomials have
minimal signature, as stated in the following lemma.

\begin{lemma}\label{lem:rew_and_signatures}
  Let $g\in\RX$ be a polynomial and let $m\in\supp(g)$
  be such that $\sigma(m)=\sigma(g)$. If $f$ can be rewritten to 
  $h=f+\lambda agb$ using $(g,m)$, then
  \[\sigma(f)\subseteq\sigma(h)\quad \text{and}\quad
  \sigma(f)\subseteq\sigma(amb).\]
\end{lemma}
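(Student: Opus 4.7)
The plan is to handle the two inclusions separately, using the earlier lemma on increasing signatures as the only real tool.

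The second inclusion $\sigma(f) \subseteq \sigma(amb)$ is almost immediate. By definition of the rewriting, $amb = m_f \in \supp(f)$, and since $\sigma(f)$ is defined as the intersection $\bigcap_{m' \in \supp(f)} \sigma(m')$, we get $\sigma(f) \subseteq \sigma(m_f) = \sigma(amb)$ directly.

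For the first inclusion $\sigma(f) \subseteq \sigma(h)$, the key observation is that the hypothesis $\sigma(m) = \sigma(g)$ implies $\sigma(m) \subseteq \sigma(g)$, so Lemma~\ref{lem:increasing_signatures} yields $\sigma(amb) \subseteq \sigma(agb)$. Combined with the second inclusion, this gives $\sigma(f) \subseteq \sigma(amb) \subseteq \sigma(agb)$. In other words, every signature of $f$ is also a signature of the polynomial $agb$ that we are adding (up to a scalar).

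To finish, I would argue directly at the level of supports rather than invoking Lemma~\ref{lem:signatures} (whose formulation is stated for two polynomials with overlapping signatures and would require a small case distinction if $h$ happens to vanish). Since $h = f + \lambda agb$, we have $\supp(h) \subseteq \supp(f) \cup \supp(agb)$. Fix any $(u,v) \in \sigma(f)$ and any $m' \in \supp(h)$. If $m' \in \supp(f)$, then $(u,v) \in \sigma(m')$ by definition of $\sigma(f)$. Otherwise $m' \in \supp(agb)$, and the inclusion $\sigma(f) \subseteq \sigma(agb) = \bigcap_{m' \in \supp(agb)} \sigma(m')$ established above gives $(u,v) \in \sigma(m')$. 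In either case, $(u,v)$ lies in $\sigma(m')$ for every $m' \in \supp(h)$, which by definition means $(u,v) \in \sigma(h)$.

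There is no real obstacle here; the proof is essentially bookkeeping. The only subtlety to watch is the possibility that the rewriting cancels $m_f$ entirely (in fact, this is the intended case when $\lambda$ is chosen for standard reduction) or even that $h = 0$, but the support-level argument handles both cases uniformly since $\sigma(0) = V \times V$ trivially contains $\sigma(f)$.
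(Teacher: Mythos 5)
Your proof is correct and follows essentially the same route as the paper: $\sigma(f)\subseteq\sigma(amb)$ because $amb\in\supp(f)$, then $\sigma(amb)\subseteq\sigma(agb)$ via Lemma~\ref{lem:increasing_signatures} applied with $\sigma(m)=\sigma(g)$. The only deviation is the final step, where the paper cites point 1 of Lemma~\ref{lem:signatures} to pass from $\sigma(f)\subseteq\sigma(\lambda agb)$ to $\sigma(f)\subseteq\sigma(h)$, whereas you unfold the definition of $\sigma$ over $\supp(h)\subseteq\supp(f)\cup\supp(agb)$ --- an inlining of the same argument that, as a minor bonus, sidesteps the compatibility and nonemptiness hypotheses of that lemma (which are harmless here anyway, since the claim is vacuous when $\sigma(f)=\emptyset$).
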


\begin{proof}
  By definition of signatures, $\sigma(f)\subseteq\sigma(amb)$. By
  Lemma~\ref{lem:increasing_signatures} and from $\sigma(m)=\sigma(g)$,
  we have $\sigma(amb)\subseteq\sigma(agb)$. Altogether, $\sigma(f)$ is
  included in $\sigma(agb)$, which itself is contained in
  $\sigma(\lambda agb)$. From 1. of Lemma~\ref{lem:signatures}, we deduce
  $\sigma(f)\subseteq\sigma(h)$.
\end{proof}

Now, we define the rewriting relation induced by a fixed choice of
divisor monomials and its compatibility with a quiver.
For any rewriting relation we denote single rewriting steps by $\to$ and the reflexive transitive closure by $\rewTrans$.

\begin{definition}\label{def:compatible_rew_sys}
  Let $G\subseteq\RX$ be a set of polynomials and let $\DM:G\to\PX$ be a
  function from $G$ to the power set of $\Span{X}$, such that
  $\DM(g)\subseteq\supp(g)$, for every $g\in G$.
  \begin{enumerate}
  \item For $g\in G$, we say that $m\in\DM(g)$ is a {\em divisor
    monomial} of $g$ w.r.t.\ $\DM$.
  \item We say that $f$ {\em rewrites} to $h$ by $(G,\DM)$, denoted as
    $f\to_{G,\DM}h$, if there exists $g\in G$ and a divisor monomial
    $m\in\DM(g)$ such that $f$ can be rewritten to $h$ using $(g,m)$.
  \item We say that $\DM$ is {\em compatible} with a labelled quiver Q if
    for every $g\in G$ and every $m\in\DM(g)$, we have
    $\sigma(m)=\sigma(g)$.
  \end{enumerate}
\end{definition}

From now on, we fix a set of polynomials $G\subseteq\RX$ as well as a map
$\DM$ selecting divisor monomials.

\begin{remark}\label{rem:DM}
  Notice that there exist two extreme cases for the definition of $\DM$:
  \begin{enumerate}
  \item $\DM$ selects exactly one monomial for each $g\in G$, for instance,
    the leading monomial $\LM(g)$ w.r.t.\ a monomial order, see the example
    in Section~\ref{sec:compatible_GB}.
  \item All monomials in $\supp(g)$ are divisor monomials.
    Then, $\to_{G,\DM}$ coincides with the rewriting relation introduced
    in~\cite[Definition 2]{Raab2019FormalPO}, for which ideal membership is
    equivalent to reduction to zero~\cite[Lemma 4]{Raab2019FormalPO}.
    Moreover, if such a $\DM$ is compatible with $Q$, then all polynomials in
    $G$ are uniformly compatible, i.e., every monomial of a polynomial has the same signature.
    The following theorem gives a generalization of Corollary~17 in \cite{Raab2019FormalPO}.
  \end{enumerate}
\end{remark}

\begin{theorem}\label{thm:consequences}
  Let $G\subseteq\RX$ be a set of polynomials and let $\DM$ be a function
  selecting divisor monomials as in
  Definition~\ref{def:compatible_rew_sys}. Let $f\in\RX$ be a polynomial
  such that $f\rewTrans_{G,\DM}0$. Then, for every labelled quiver $Q$
  with labels $X$ such that $\DM$ is compatible with $Q$, we have that
  \[f\text{ is compatible with }Q \iff f\text{ is a $Q$-consequence of }G.\]
\end{theorem}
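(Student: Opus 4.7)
My plan is to prove the two implications separately; the reverse direction is immediate, while the forward direction will rely on iterating Lemma~\ref{lem:rew_and_signatures} along the rewriting chain and then invoking Proposition~\ref{prop:Q-consequence_criterion}.

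For the implication from ``$Q$-consequence of $G$'' to ``compatible with $Q$'', Definition~\ref{def:Q-consequences} bakes compatibility into the very definition of $Q$-consequence, so there is nothing to prove.

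For the forward direction, assume $f$ is compatible with $Q$ and pick a concrete rewriting sequence $f = f_0 \to_{G,\DM} f_1 \to_{G,\DM} \cdots \to_{G,\DM} f_n = 0$. Each individual step is of the form $f_{i+1} = f_i + \lambda_i a_i g_i b_i$ with $g_i \in G$, $m_i \in \DM(g_i)$, $a_i, b_i \in \Span{X}$, and $a_i m_i b_i \in \supp(f_i)$. Telescoping the chain yields the representation $f = -\sum_{i=0}^{n-1} \lambda_i a_i g_i b_i$, which has exactly the shape required by Proposition~\ref{prop:Q-consequence_criterion}, with the natural candidate for $m_{g_i}$ being $m_i$ itself. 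Since $\DM$ is compatible with $Q$, we have $\sigma(m_i) = \sigma(g_i)$ by the third item of Definition~\ref{def:compatible_rew_sys}, so the hypothesis on $m_{g_i}$ in the proposition is met.

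The one nontrivial verification is that $\sigma(f) \subseteq \sigma(a_i m_i b_i)$ for every $i$, not only for $i = 0$. The plan is to prove by induction on $i$ that $\sigma(f_0) \subseteq \sigma(f_i)$, using the first conclusion of Lemma~\ref{lem:rew_and_signatures} at each step (applicable precisely because $\sigma(m_i) = \sigma(g_i)$). Then, applying the second conclusion of the same lemma to the step $f_i \to_{G,\DM} f_{i+1}$ gives $\sigma(f_i) \subseteq \sigma(a_i m_i b_i)$, and chaining with the induction hypothesis yields $\sigma(f) \subseteq \sigma(a_i m_i b_i)$ as required. With all hypotheses verified, Proposition~\ref{prop:Q-consequence_criterion} concludes that $f$ is a $Q$-consequence of $G$.

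The main (minor) obstacle is just making the bookkeeping explicit: one must notice that the signature inclusion has to be propagated along the whole chain, since Lemma~\ref{lem:rew_and_signatures} only provides information about one step at a time, and the decomposition obtained from telescoping involves cofactors coming from \emph{every} step, not only the first.
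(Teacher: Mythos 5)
Your proposal is correct and follows essentially the same route as the paper's proof: telescope the rewriting chain to get $f=-\sum_i\lambda_ia_ig_ib_i$, propagate the signature inclusions $\sigma(f)\subseteq\sigma(h_{i-1})\subseteq\sigma(a_im_ib_i)$ inductively via Lemma~\ref{lem:rew_and_signatures}, and conclude with Proposition~\ref{prop:Q-consequence_criterion}, the converse being immediate from Definition~\ref{def:Q-consequences}.
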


\begin{proof}
  Since $f$ rewrites to zero, there exists a sequence
  $f=h_0\to h_1\to\cdots\to h_n=0$. Hence, there exist $\lambda_i\in R$,
  $a_i,b_i\in\Span{X}$, $g_i\in G$, and $m_i\in\DM(g_i)$ such that
  $h_i=h_{i-1}+\lambda_ia_ig_ib_i$ and
  $a_im_ib_i\in\supp(h_{i-1})$. Hence, $f$ can be written as
  $f=\sum_{i=1}^n-\lambda_ia_ig_ib_i$. From
  Lemma~\ref{lem:rew_and_signatures}, we conclude inductively that
  $\sigma(f)\subseteq\sigma(h_{i-1})\subseteq\sigma(a_im_ib_i)$. Hence,
  if $f$ is compatible with $Q$, then $f$ is a $Q$-con\-se\-quence of $G$ by
  Proposition~\ref{prop:Q-consequence_criterion}.  Conversely, if $f$ is
  a $Q$-consequence of $G$, then it is compatible by definition.
\end{proof}

\begin{example}\label{ex:non_compatible_DM}
  Let us translate Example~\ref{ex:Q-csq} in the language introduced in
  this section. The leading monomials w.r.t.\ the degree-lexicographic order
  used in that example can be understood as the divisor monomials selected by
  the function $\DM$ defined on $F$ such that
  $\DM(f_i)=\{\LM(f_i)\}$ holds for all $i$. In particular,
  \begin{gather*}
    \DM(f_1)=\{dh_1\},\quad \DM(f_2)=\{dh_2\},\\
    \DM(f_3)=\{h_1\tilde{h}_1\},\quad\DM(f_4)=\{h_2\tilde{h}_2\},
    \quad\DM(f_5)=\{di\}.
  \end{gather*}
  Then, $\DM$ is not compatible with $Q$, since $\sigma(f_2)=\{(v_1,v_2)\}$
  is not equal to $\sigma(dh_2)=\{(v_1,v_2),(v_2,v_3)\}$. Hence, we cannot 
  apply Theorem~\ref{thm:consequences} to show that $f$ is a $Q$-consequence
  of $F$ even though $f\overset{*}{\to}_{F,\DM}0$. So, we need to look at the
  explicit representation of $f$ induced by this reduction, which was already done in
  Example~\ref{ex:Q-csq}. In order to apply Theorem~\ref{thm:consequences},
  we need to redefine $\DM$ so that it is compatible with $Q$. In
  particular, we need to impose $\DM(f_2)\subseteq\{h_2d,b_2h_2\}$. If
  $b_2h_2\in\DM(f_2)$, then $f\overset{*}{\to}_{F,\DM}0$, which gives another
  proof that $f$ is a $Q$-consequence of $F$ based on
  Theorem~\ref{thm:consequences}. Otherwise, if $\DM(f_2)=\{h_2d\}$, then
  $f$ is irreducible w.r.t.\ $\overset{*}{\to}_{F,\DM}$. Therefore, we need
  to complete $F$ with $Q$-consequences of it such that $\DM$ remains
  compatible with $Q$ and $f$ reduces to zero, which is the topic of the next
  section.
\end{example}

\section{Compatible reductions and partial \Gr\ bases}
\label{sec:compatible_GB}

In this section, we discuss standard noncommutative polynomial reduction
as a special case of the rewriting approach from the previous section.
Since in the noncommutative case, \Gr\ bases are not necessarily finite,
see~\cite{MR1299371}, we also have to work with partial \Gr\ bases which are
obtained by finitely many iterations of the Buchberger procedure.
We adapt the noncommutative Buchberger procedure for computing (partial) \Gr\ bases that can be used for compatible rewriting.

In what follows, $R$ is assumed to be a field $\K$ and we fix a 
monomial order $\leq$ on $\Span{X}$, that is, a well-founded total order
compatible with multiplication on $\Span{X}$. We also fix a labelled quiver
$Q$ with labels in $X$ and a set of polynomials $F\subseteq\KX$. Given a set
of polynomials $G\subseteq\KX$, one step of the standard polynomial reduction
w.r.t.\ $G$ is denoted by $f\to_Gh$.

As explained in Remark~\ref{rem:DM}, the monomial order induces the $\DM$
function that selects leading monomials of a set $G\subseteq\KX$. This $\DM$
function is compatible with $Q$ if and only if all elements of $G$ are $Q$-order
compatible in the following sense.

\begin{definition}
  A compatible polynomial $f$ is said to be $Q$-\emph{order compatible}
  if $\sigma(\LM(f))=\sigma(f)$. 
\end{definition}

By transitivity of $Q$-consequences, see
Theorem~\ref{thm:transitive}, and Theorem~\ref{thm:consequences}, we
obtain the following statement.

\begin{corollary}\label{cor:reduction_crit}
  Let $F\subseteq\KX$, $G\subseteq(F)$, and $f\in\KX$ such that $f\rewTrans_G0$. Then,
  for all labelled quivers $Q$ such that all elements of $G$ are both
  $Q$-consequences of $F$ and $Q$-order compatible, we have
  \[f\text{ is compatible with }Q \iff f\text{ is a $Q$-consequence of }F.\]
\end{corollary}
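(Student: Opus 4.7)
The plan is to derive this corollary as a straightforward combination of Theorem~\ref{thm:consequences} and Theorem~\ref{thm:transitive}, together with the observation preceding the statement that $Q$-order compatibility of all elements of $G$ is exactly what makes the $\DM$ function selecting leading monomials compatible with $Q$.

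First, I would set up the divisor-monomial map $\DM$ on $G$ by $\DM(g):=\{\LM(g)\}$. By hypothesis every element of $G$ is $Q$-order compatible, i.e.\ $\sigma(\LM(g))=\sigma(g)$ for each $g\in G$, so $\DM$ is compatible with $Q$ in the sense of Definition~\ref{def:compatible_rew_sys}. Moreover, standard polynomial reduction $\to_G$ is precisely the rewriting relation $\to_{G,\DM}$ for this choice of $\DM$, so the hypothesis $f\rewTrans_G 0$ gives $f\rewTrans_{G,\DM}0$.

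For the forward direction, suppose $f$ is compatible with $Q$. Applying Theorem~\ref{thm:consequences} to the pair $(G,\DM)$ yields that $f$ is a $Q$-consequence of $G$. Since every element of $G$ is assumed to be a $Q$-consequence of $F$, Theorem~\ref{thm:transitive} lets us promote this to: $f$ is a $Q$-consequence of $F$. The converse is immediate, because every $Q$-consequence of $F$ is by Definition~\ref{def:Q-consequences} required to be compatible with $Q$.

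I do not expect any genuine obstacle here: the proof is essentially a bookkeeping assembly of the two previously established theorems, with the role of $Q$-order compatibility being solely to certify that $\DM=\LM$ meets the hypothesis of Theorem~\ref{thm:consequences}. The only point worth a line of justification is that the standard reduction relation $\to_G$ coincides with $\to_{G,\DM}$ when $\DM$ picks out the leading monomials, so that Theorem~\ref{thm:consequences} is indeed applicable to the given reduction sequence.
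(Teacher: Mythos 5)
Your proposal is correct and follows exactly the route the paper intends: it instantiates $\DM(g)=\{\LM(g)\}$ so that $Q$-order compatibility of $G$ certifies compatibility of $\DM$, applies Theorem~\ref{thm:consequences} to conclude that $f$ is a $Q$-consequence of $G$, and then uses transitivity (Theorem~\ref{thm:transitive}) to pass from $G$ to $F$. The paper gives no further detail beyond citing these two theorems, so your write-up is, if anything, a more complete version of the same argument.
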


\begin{remark}\label{lem:Qord_is_mon_ord}
  For polynomials, being $Q$-order compatible can also be interpreted in
  terms of a partial monomial order. Given $m,m'\in\Span{X}$, we define
  $m\Qord m'$ if $m\leq m'$ and $\sigma(m')\subseteq\sigma(m)$.
  The partial order $\Qord$ respects multiplication of monomials since, by Lemma~\ref{lem:increasing_signatures}, $\sigma(m')\subseteq\sigma(m)$ implies $\sigma(am'b)\subseteq\sigma(amb)$ for all
  $a,b\in\Span{X}$. Then, $f$ is $Q$-order compatible if and only if
  $\supp(f)$ admits a greatest element for $\Qord$.
\end{remark}

Candidates for $G$ as in Corollary~\ref{cor:reduction_crit} are
partial \Gr\ bases that are computed by the noncommutative Buchberger
procedure~\cite{buchberger1965algorithmus, MR1299371}. However, in view of
the assumptions, we only add reduced $S$-polynomials that are both
$Q$-consequences of $F$ and $Q$-order compatible in each iteration.
Checking $Q$-order compatibility is easy. Selecting $Q$-consequences is
harder since we do not want to use explicit representations as in 
Definition~\ref{def:Q-consequences}. Instead, we propose a simpler
criterion based on the following lemma and discussion.

First, we recall some terminology and fix notations for $S$-poly\-no\-mials.
Let $G\subseteq\KX$. \emph{Ambiguities} of $G$ defined in~\cite{Bergman1978},
also called compositions in~\cite{MR0506423}, 
are given by minimal overlaps or inclusions of the two leading monomials $\LM(g)$ and
$\LM(g')$, where $g$ and $g'$ belong to $G$. Formally, each ambiguity can be
described by a $6$-tuple $\amb=~(g,g',a,b,a',b')$, where $a,b,a',b'$
are monomials such that, among other conditions, we have
\[a\LM(g)b=a'\LM(g')b'.\]
This monomial is called the {\em source} of $\amb$
and the $S$-polynomial of $\amb$ is $\SP(\amb):=agb-a'g'b'$,
cf.~\cite{MR1299371}.

\begin{lemma}\label{lem:comp_SP}
  Let $G\subseteq\KX$ be a set of $Q$-order compatible polynomials and
  let $s$ be a $S$-polynomial of $G$ with source a compatible monomial
  $m\in\Span{X}$. Then $\sigma(m)\subseteq\sigma(s)$. If moreover,
  $s\rewTrans_G\hat{s}$ with $\sigma(\hat{s})\subseteq\sigma(m)$, then
  $\sigma(s)=\sigma(\hat{s})=\sigma(m)$ and $\hat{s}$ is a
  $Q$-consequence of $G$.
\end{lemma}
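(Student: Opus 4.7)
My plan is to derive part~1 by applying Lemma~\ref{lem:increasing_signatures} to each half of the $S$-polynomial and combining the two bounds via Lemma~\ref{lem:signatures}.1, and to derive part~2 by first iterating Lemma~\ref{lem:rew_and_signatures} along the reduction $s\rewTrans_G\hat s$ to squeeze all three signatures together, and then feeding the resulting explicit representation into Proposition~\ref{prop:Q-consequence_criterion}.

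For part~1, I would write the ambiguity as $\amb=(g,g',a,b,a',b')$ so that $a\LM(g)b=a'\LM(g')b'=m$, and recall $s=agb-a'g'b'$ (up to the nonzero scalars needed to cancel the leading coefficients, which do not affect signatures). Since $G$ is $Q$-order compatible, $\sigma(\LM(g))=\sigma(g)$ and $\sigma(\LM(g'))=\sigma(g')$, so Lemma~\ref{lem:increasing_signatures} applied to the factorizations $a\LM(g)b=m$ and $a'\LM(g')b'=m$ yields $\sigma(m)\subseteq\sigma(agb)$ and $\sigma(m)\subseteq\sigma(a'g'b')$. Because $m$ is compatible, $\sigma(m)\neq\emptyset$, so these two sets intersect and Lemma~\ref{lem:signatures}.1 gives $\sigma(s)\supseteq\sigma(agb)\cap\sigma(a'g'b')\supseteq\sigma(m)$.

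For part~2, I would first observe that $Q$-order compatibility of $G$ makes the divisor-monomial choice $\DM(g''):=\{\LM(g'')\}$ on $G$ compatible with $Q$, so Lemma~\ref{lem:rew_and_signatures} applies at every step of any reduction chain $s=h_0\to_G h_1\to_G\cdots\to_G h_N=\hat s$ and gives $\sigma(h_{i-1})\subseteq\sigma(h_i)$. Iterating produces $\sigma(s)\subseteq\sigma(\hat s)$, and together with part~1 and the hypothesis $\sigma(\hat s)\subseteq\sigma(m)$ this collapses the chain $\sigma(m)\subseteq\sigma(s)\subseteq\sigma(\hat s)\subseteq\sigma(m)$ to the triple equality $\sigma(m)=\sigma(s)=\sigma(\hat s)$.

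To finish, I would collect the two $S$-polynomial terms and the cofactors from every reduction step into an explicit expression $\hat s=\sum_k\mu_k\alpha_k\gamma_k\beta_k$ with $\mu_k\in\K$, $\gamma_k\in G$, and $\alpha_k,\beta_k\in\Span{X}$. The definition of rewriting forces $\alpha_k\LM(\gamma_k)\beta_k\in\supp(h_{k-1})$ for every reduction summand, whence $\sigma(\hat s)=\sigma(s)\subseteq\sigma(h_{k-1})\subseteq\sigma(\alpha_k\LM(\gamma_k)\beta_k)$; for the two $S$-polynomial summands the same inclusion is immediate since $a\LM(g)b=a'\LM(g')b'=m$ and $\sigma(\hat s)=\sigma(m)$. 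Taking $m_{\gamma_k}:=\LM(\gamma_k)$, Proposition~\ref{prop:Q-consequence_criterion} then certifies that $\hat s$ is a $Q$-consequence of $G$. The only mildly delicate part is this last bookkeeping, verifying the signature inclusion for every summand simultaneously while remembering that the reduction starts at $s$ rather than at $0$; everything else is a direct chaining of the lemmas already established.
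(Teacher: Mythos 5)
Your proposal is correct and follows essentially the same route as the paper's proof: part~1 via Lemma~\ref{lem:increasing_signatures} applied to $a\LM(g)b=a'\LM(g')b'=m$ combined with Lemma~\ref{lem:signatures}, and part~2 by iterating Lemma~\ref{lem:rew_and_signatures} to collapse the chain of signature inclusions and then verifying the hypotheses of Proposition~\ref{prop:Q-consequence_criterion} summand by summand. The bookkeeping you flag as delicate is handled in the paper exactly as you describe, so there is nothing to add.
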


\begin{proof}
  Since $s$ is a $S$-polynomial of $G$ of source $m$, there exist
  $g,g'\in G$ and monomials $a,a',b,b'\in\Span{X}$ such that
  $s=(m-agb)-(m-a'g'b')$ with $a\LM(g)b=a'\LM(g')b'=m$.
  
  Let us prove the first assertion.  The polynomials $g$ and $g'$ being $Q$-order
  compatible, we have $\sigma(\LM(g))=\sigma(g)$ and
  $\sigma(\LM(g'))=\sigma(g')$. Hence,
  from Lemma~\ref{lem:increasing_signatures}, we have
  \begin{equation}\label{equ:sigma_SP}
    \begin{gathered}
      \sigma(m)=\sigma(a\LM(g)b)\subseteq\sigma(agb),\\
      \sigma(m)=\sigma(a'\LM(g')b')\subseteq\sigma(a'g'b').
    \end{gathered}
  \end{equation}
  From this and $s=a'g'b'-agb$, we get that
  $\sigma(m)\subseteq\sigma(s)$.

  Now, we assume that $s\rewTrans_G\hat{s}$, that is there is a rewriting
  sequence
  \[s_0=s\to_Gs_1\to_{G}\cdots\to_Gs_n=\hat{s},\]
  so that there exist $g_i\in G$, $a_i,b_i\in\Span{X}$ and
  $\lambda_i\in\K$, $1\leq i\leq n$, such that
  $a_i\LM(g_i)b_i\in\supp(s_i)$ and $s_{i+1}=s_i+\lambda_ia_ig_ib_i$, so
  that we have $\hat{s}-s=\sum_i\lambda_ia_ig_ib_i$ and
  \begin{equation}\label{equ:decompo_s_hat}
    \hat{s}=\sum_{i=1}^{n}\lambda_ia_ig_ib_i+a'g'b'-agb.
  \end{equation}
  Using inductively $s_i\to_G s_{i+1}$ and
  Lemma~\ref{lem:rew_and_signatures}, we get
  \begin{equation}\label{equ:sigma_s}
    \sigma(s)\subseteq\sigma(\hat{s})\quad\text{and}\quad
    \sigma(s)\subseteq\sigma(a_i\LM(g_i)b_i),
  \end{equation}
  so that we have
  \[\sigma(m)\subseteq\sigma(s)\subseteq\sigma(\hat{s}).\]
  If, moreover $\sigma(\hat{s})\subseteq\sigma(m)$, then we get the
  following sequence of inclusions:
  \[\sigma(\hat{s})\subseteq\sigma(m)\subseteq\sigma(s)\subseteq
  \sigma(\hat{s}).\]
  Hence, the equality $\sigma(\hat{s})=\sigma(s)=\sigma(m)$ holds. Now,
  we show that $\hat{s}$ is a $Q$-consequence using 
  Proposition~\ref{prop:Q-consequence_criterion}. Since the elements of
  $G$ are $Q$-order compatible, we have
  $\sigma(\LM(\tilde{g}))=\sigma(\tilde{g})$, for all $\tilde{g}\in G$.
  Moreover, since $m$ is compatible and $\sigma(\hat{s})=\sigma(m)$, 
  $\hat{s}$ is compatible. Finally, from~\eqref{equ:sigma_SP} and
  \eqref{equ:sigma_s}, we have the following inclusions:
  $\sigma(\hat{s})\subseteq\sigma(a\LM(g)b)$,
  $\sigma(\hat{s})\subseteq\sigma(a'\LM(g')b')$ and
  $\sigma(\hat{s})\subseteq\sigma(a_i\LM(g_i)b_i)$. As a conclusion,
  $\hat{s}$ is a $Q$-consequence of $G$.
\end{proof}

Starting with a set of $Q$-order compatible polynomials $F$, we apply
this lemma in the case where $G$ is the partial \Gr\ basis
computed in the current iteration of the completion procedure. In
particular, if a reduced $S$-polynomial $\hat{s}$ satisfies 
$\sigma(\hat{s})\subseteq\sigma(m)$ as in the lemma, then it is a
$Q$-consequence of $G$. By transitivity, it is then also a
$Q$-consequence of $F$, which follows from the following observation.

\begin{remark}\label{rem:inductive_Q-csq}
  Consider a set $F\subseteq\KX$ of compatible polynomials and a family
  of sets $G_i$ inductively defined by $G_0=\emptyset$ and
  $G_{i+1}=G_i\cup\{g_{i+1}\}$, where $g_{i+1}$ is a $Q$-consequence of
  $F\cup G_i$. Using inductively transitivity of $Q$-consequences proven in Theorem~\ref{thm:transitive}, we
  obtain that, for each $i$, all elements of $F\cup G_i$ are
  $Q$-consequences of $F$.
\end{remark}

In summary, we obtain the following adaptation of the noncommutative version
of Buchberger's procedure for computing a partial \Gr\ basis composed
of elements that are both $Q$-con\-se\-quen\-ces of $F$ and $Q$-order
compatible. At each step, we select an $S$-polynomial $s$ whose source
$m$ is a compatible monomial, and we keep a reduced form $\hat{s}$ only
if it is $Q$-order compatible and $\sigma(\hat{s})\subseteq\sigma(m)$.
This procedure is implemented in the \textsc{Mathematica} package \texttt{OperatorGB}.
Note that
since the Buchberger procedure does not terminate in general for noncommutative polynomials, also our
adaptation of it is not guaranteed to terminate.

Notice that the completion procedure described above can be slightly generalized by not
necessarily computing reduced forms of $S$-polynomials. Instead, we only
reduce an $S$-polynomial as long as it remains a $Q$-consequence, see
the discussion above, and it remains $Q$-order compatible.
This is stated formally in Procedure~\ref{proc:Q-completion}.

\begin{algorithm}
  \caption{$Q$-order compatible completion}
  \label{proc:Q-completion}
  \begin{algorithmic}[1]
    \REQUIRE $F\subseteq\KX$, a labelled quiver $Q$ with labels in $X$, and a monomial order $\leq$ such that every $f\in F$ is $Q$-order compatible
    \ENSURE $G\supseteq F$ a set of $Q$-consequences of $F$ that are $Q$-order compatible
    \STATE $P:=$ ambiguities of $F$; $G:=F$
    \WHILE{$P\neq\emptyset$}
    \STATE choose $\amb\in P$
    \STATE $P:=P\setminus\{\amb\}$; $s:=\SP(\amb)$;
    $m:=$ the source of $\amb$
    \IF{$m$ is compatible \AND $\sigma(s)\subseteq\sigma(m)$ \AND $s$ is $Q$-order compatible}
    \WHILE{$\exists s': s\to_Gs'$}
    \IF{$s'=0$}
    \STATE \textbf{go to} 2 (i.e., break the outer {\bf if} statement)
    \ELSIF{$\sigma(s')\subseteq\sigma(m)$ \AND $s'$ is $Q$-order compatible}
    \STATE $s:=s'$
    \ELSE
    \STATE \textbf{go to} 15 (i.e., break the inner {\bf while} loop)
    \ENDIF
    \ENDWHILE
    \STATE $G:=G\cup\{s\}$
    \STATE $P:=P\cup\{\text{ambiguities created by}\ s\}$
    \ENDIF
    \ENDWHILE
    \RETURN{$G$}
  \end{algorithmic}
\end{algorithm}

Due to the checks in line $9$, each element $g$ of the output $G$ of the
procedure is both a $Q$-consequence of $F$ and $Q$-order compatible. In
summary, we have shown that our procedure is correct.

\begin{theorem}
  Let $F\subseteq\KX$ be a set of polynomials, let $Q$ be a labelled
  quiver with labels in $X$ and let $\leq$ be a monomial order such that
  each element of $F$ is $Q$-order compatible.
  Then, each element of the output $G$ of Procedure~\ref{proc:Q-completion} is both a $Q$-consequence of $F$ and $Q$-order
  compatible.
\end{theorem}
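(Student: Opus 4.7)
The plan is to proceed by induction on the number of completed iterations of the outer \textbf{while} loop of Procedure~\ref{proc:Q-completion}, maintaining the joint invariant that every element of the current $G$ is both a $Q$-consequence of $F$ and $Q$-order compatible. The base case is immediate: initially $G=F$, and each element of $F$ is $Q$-order compatible by hypothesis and trivially a $Q$-consequence of $F$ (take the one-term decomposition $f=1{\cdot}f{\cdot}1$, whose signature condition holds vacuously).

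For the inductive step, consider an iteration that actually modifies $G$ by adding some polynomial $s$. The $Q$-order compatibility of $s$ is enforced directly by the procedure: either the initial $s=\SP(\amb)$ from line~4 passes the check in line~5, or $s$ has been overwritten by some $s'$ which passed the corresponding check in line~9. So only the nontrivial half of the invariant requires work: that the newly added $s$ is a $Q$-consequence of $F$.

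For this, I would invoke Lemma~\ref{lem:comp_SP} with the set of polynomials being the current $G$ (before $s$ is added). By the inductive hypothesis, every element of $G$ is $Q$-order compatible, so the lemma's running hypothesis is met. The source $m$ of the selected ambiguity $\amb$ is compatible by the line~5 check, and $s$ is obtained from $\SP(\amb)$ by a finite rewriting sequence $\SP(\amb)\rewTrans_G s$. A simple secondary induction along this inner rewriting sequence shows that at every stage the current value of $s$ satisfies $\sigma(s)\subseteq\sigma(m)$: the initial value $\SP(\amb)$ satisfies it by the line~5 guard, and each replacement $s:=s'$ in line~10 is gated by the line~9 check $\sigma(s')\subseteq\sigma(m)$. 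Therefore the hypotheses of Lemma~\ref{lem:comp_SP} are met and $s$ is a $Q$-consequence of $G$. Combining this with the inductive assumption that every element of $G$ is a $Q$-consequence of $F$, the transitivity statement Theorem~\ref{thm:transitive}, applied as in Remark~\ref{rem:inductive_Q-csq}, yields that $s$ is a $Q$-consequence of $F$, closing the induction.

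The hard part is not any single deduction but rather the careful tracking of the invariants through the two nested loops; in particular, one must confirm that every intermediate $s$ produced inside the inner \textbf{while} loop satisfies $\sigma(s)\subseteq\sigma(m)$, since this is exactly the condition Lemma~\ref{lem:comp_SP} demands of the reduced polynomial $\hat s$. Once this bookkeeping is in place, the two proper escape routes from the inner loop (success in line~7 with $s'=0$ before anything is added, or exit via lines~6 or~12 with the final $s$ satisfying both guards) both preserve the invariant, and the theorem follows at the termination of the outer loop or, in the nonterminating case, as a statement about every intermediate $G$.
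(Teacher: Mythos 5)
Your proof is correct and follows essentially the same route as the paper, which compresses the argument into the sentence preceding the theorem together with Lemma~\ref{lem:comp_SP} and Remark~\ref{rem:inductive_Q-csq}: apply Lemma~\ref{lem:comp_SP} to the current partial basis (whose elements are $Q$-order compatible by the loop invariant and the line-5/line-9 guards) and then use transitivity of $Q$-consequences inductively. Your write-up merely makes explicit the loop-invariant bookkeeping that the paper leaves implicit.
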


\begin{example}
  Let us continue Example~\ref{ex:non_compatible_DM} in the case
  $\DM(f_2)=\{h_2d\}$. For that, we consider the field $\K=\mathbb{Q}$ and a degree-lexicographic order
  such that $d<b_2<h_2$ and $d$ is greater than $b_1$ and $h_1$. Then,
  choosing the ambiguity $(f_2,f_5,1,i,h_2,1)$, the first iteration of the
  outer loop in Procedure~\ref{proc:Q-completion} yields
  $G:=F\cup\{b_2h_2i-dh_2i+h_2\}$. With this $G$, we have
  $f\overset{*}{\to}_G0$. From this reduction to $0$, and since $f$ is
  compatible with $Q$, $f$ is a $Q$-consequence of $F$ by
  Corollary~\ref{cor:reduction_crit}. These computations can also be done
  by the package.
\end{example}

\begin{remark}
\label{rem:uniquelabels}
 We consider the special case when all edges of $Q$ have unique labels.
 Then, all non-constant monomials have at most one element in their signature.
 Therefore, every compatible polynomial is $Q$-order compatible for any monomial order, since the monomial $1$ is the smallest.
 Moreover, one can show easily that the source of an ambiguity of two polynomials is compatible whenever these two polynomials are compatible with $Q$.
 In addition, from Lemma~\ref{lem:rew_and_signatures}, it follows that polynomial reduction of compatible polynomials by compatible ones does not change the signature unless the result of the reduction lies in $\K$.
 Altogether, Procedure~\ref{proc:Q-completion} reduces to the standard Buchberger procedure (i.e., without checking signatures and compatibility during computation) as long as no $S$-polynomial is (or is reduced to) a nonzero constant.
 In other words, we have the following theorem, which, together with Theorem~\ref{thm:realizations}, gives a generalization of Theorem~1 in \cite{Raab2019FormalPO}.
\end{remark}

\begin{theorem}\label{thm:uniquelabels}
 Assume that edges of $Q$ have unique labels.
 Let $F\subseteq\KX$ be a set of compatible polynomials and let $G$ be a (partial) \Gr\ basis computed by the standard Buchberger procedure (i.e., disregarding $Q$ during computation).
 If $G$ does not contain a constant polynomial, then for every polynomial $f \in \KX$ such that $f \rewTrans_G 0$, we have
 \[f\text{ is compatible with }Q \iff f\text{ is a $Q$-consequence of }F.\]
 Moreover, if $1\not\in(F)$, then this equivalence holds for every $f\in(F)$.
\end{theorem}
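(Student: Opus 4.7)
The plan is to reduce the claim to Corollary~\ref{cor:reduction_crit}, which only requires each $g\in G$ to be both a $Q$-consequence of $F$ and $Q$-order compatible. Under the assumption that all edges of $Q$ carry unique labels, Remark~\ref{rem:uniquelabels} already observes that every compatible polynomial is automatically $Q$-order compatible for any monomial order (non-constant monomials have at most singleton signatures, and $1$ is the smallest monomial, so if $\LM(g)$ is non-constant then $\sigma(\LM(g))\supseteq\sigma(g)$ forces equality). Thus $Q$-order compatibility of $G$ will be free once we have shown that every element of $G$ is compatible with $Q$.

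The heart of the proof is an induction on the steps of the standard Buchberger procedure producing $G$, showing that at every stage each element is both compatible with $Q$ and a $Q$-consequence of $F$. The base case is the assumption on $F$. For the inductive step, consider an ambiguity $\amb$ of two elements $g_1,g_2$ already in the current basis. By the observation in Remark~\ref{rem:uniquelabels}, the source $m$ of $\amb$ is compatible since $g_1$ and $g_2$ are, and $m$ is non-constant since $\LM(g_1)$ and $\LM(g_2)$ are (no element of $G$ is a constant). Hence $\sigma(m)$ is a non-empty singleton. Applying Lemma~\ref{lem:comp_SP} to a reduction $\hat{s}$ of $s=\SP(\amb)$ by the current basis gives $\sigma(m)\subseteq\sigma(\hat{s})$. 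If $\hat{s}=0$ it is discarded; otherwise, by the no-constant hypothesis $\hat{s}$ is non-constant, so $\sigma(\hat{s})$ is at most a singleton and therefore coincides with $\sigma(m)$. The inclusion $\sigma(\hat{s})\subseteq\sigma(m)$ then triggers the second half of Lemma~\ref{lem:comp_SP}, making $\hat{s}$ a $Q$-consequence of the current basis, and thus a $Q$-consequence of $F$ by the transitivity recorded in Remark~\ref{rem:inductive_Q-csq}. This is precisely the content of the assertion in Remark~\ref{rem:uniquelabels} that, under our hypotheses, standard Buchberger coincides with Procedure~\ref{proc:Q-completion}.

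Once the induction is complete, Corollary~\ref{cor:reduction_crit} immediately yields the claimed equivalence for every $f$ with $f\rewTrans_G 0$. For the moreover clause, $1\notin(F)$ combined with $\K$ being a field rules out any nonzero constant in $(F)$, so a full \Gr\ basis of $(F)$ contains no constant and every $f\in(F)$ reduces to $0$ by it; hence the first part applies to all such $f$. The main obstacle is the inductive bookkeeping in the middle paragraph: the no-constant hypothesis is exactly what forces $\sigma(\hat{s})$ to collapse onto the singleton $\sigma(m)$ and thereby keeps both compatibility and the $Q$-consequence property stable throughout the Buchberger completion, so that the compatibility-aware Procedure~\ref{proc:Q-completion} and the standard one produce the same output.
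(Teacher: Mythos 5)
Your proposal is correct and follows essentially the same route as the paper, whose proof of Theorem~\ref{thm:uniquelabels} is exactly the sketch in Remark~\ref{rem:uniquelabels}: unique labels make all non-constant signatures at most singletons, so compatibility gives $Q$-order compatibility for free, sources of ambiguities stay compatible, Lemma~\ref{lem:comp_SP} plus the no-constant hypothesis keeps every reduced $S$-polynomial a compatible $Q$-consequence, and Corollary~\ref{cor:reduction_crit} (with the field argument for the ``moreover'' clause) finishes the claim. You have merely filled in the inductive bookkeeping that the remark leaves implicit.
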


\section{Summary and discussion}

By Theorem~\ref{thm:realizations}, for proving new operator identities from known ones, it suffices to show that the corresponding polynomials are $Q$-consequences.
In practice, there are several options to prove that a compatible polynomial $f$ is a $Q$-consequence of some set $F$ of compatible polynomials.
Each of these options can be turned into a certificate that $f$ is a $Q$-consequence of $F$.
Given an explicit representation of $f$ in terms of $F$ of the form \eqref{equ:Q-decompo}, one can either check Definition~\ref{def:Q-consequences} directly, or expand cofactors into monomials and apply Proposition~\ref{prop:Q-consequence_criterion}.
Alternatively, using compatible rewriting, if $f \rewTrans_{F,\DM} 0$ and the selection of divisor monomials by $\DM$ is compatible with $Q$, then $f$ is a $Q$-consequence by Theorem~\ref{thm:consequences}.
Altogether, from Theorems~\ref{thm:realizations} and~\ref{thm:consequences}, we
immediately obtain the following.

\begin{corollary}\label{cor:prove_identities}
  Let $F$ be a set of polynomials, $\DM$ a function selecting divisor monomials, and $f$ a polynomial such that $f\rewTrans_{F,\DM}0$.
  Then, for all labelled quivers $Q$ such that $f$, $F$, and $\DM$ are compatible with $Q$ and for all consistent representations of $Q$ such that all realizations of all elements of $F$ are zero, all realizations of $f$ are zero.
\end{corollary}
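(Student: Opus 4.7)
The plan is to chain together the two main theorems cited just before the statement, namely Theorem~\ref{thm:consequences} and Theorem~\ref{thm:realizations}, since the corollary is explicitly advertised as an immediate consequence of these two results. There is essentially nothing to do beyond verifying that the hypotheses of the corollary feed cleanly into the hypotheses of each theorem in turn.

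First I would apply Theorem~\ref{thm:consequences}. Its hypotheses require a set of polynomials $G$ (here $G = F$), a divisor-monomial selection $\DM$ that is compatible with $Q$, and a polynomial $f$ that rewrites to zero via $\rewTrans_{F,\DM}$ and is compatible with $Q$. All of these are given by assumption: $f\rewTrans_{F,\DM}0$ is the main hypothesis, while compatibility of $f$ and of $\DM$ with $Q$ is part of the standing assumption on $Q$. The conclusion of Theorem~\ref{thm:consequences} then yields that $f$ is a $Q$-consequence of $F$.

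Next I would invoke Theorem~\ref{thm:realizations} with this $F$ and $f$. Its hypotheses ask that $f$ be a $Q$-consequence of $F$ (just established) and that we work with a consistent representation of $Q$ in which every realization of every element of $F$ is zero, which is precisely the situation described in the corollary. The conclusion is then exactly the stated one: every realization of $f$ with respect to that representation is zero.

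There is no real obstacle in this proof; the only thing to be mildly careful about is to note that the compatibility of each $g \in F$ with $Q$ is implicit in the assumption that $\DM$ is compatible with $Q$ (because Definition~\ref{def:compatible_rew_sys} requires $\DM(g)\subseteq\supp(g)$ and compatibility of $\DM$ with $Q$ forces $\sigma(g)=\sigma(m)$ for any $m \in \DM(g)$, so $\sigma(g)$ is nonempty as soon as some $m$ has nonempty signature, which follows from compatibility of $f$ and its reduction). Thus the proof reduces to a two-line application: first Theorem~\ref{thm:consequences} promotes rewriting-to-zero to being a $Q$-consequence, then Theorem~\ref{thm:realizations} transports the identity from the polynomial level to the operator level.
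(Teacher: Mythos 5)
Your proof is correct and matches the paper's own argument exactly: the paper derives this corollary by noting it follows immediately from Theorem~\ref{thm:consequences} (rewriting to zero plus compatibility yields a $Q$-consequence) combined with Theorem~\ref{thm:realizations} (a $Q$-consequence has zero realizations whenever the assumptions do). Your closing remark about the compatibility of the elements of $F$ is unnecessary, since the corollary already assumes $F$ is compatible with $Q$, but it does no harm.
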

Note that rewriting to zero w.r.t.\ $F$ and $\DM$ is independent of the quiver $Q$. In particular, if the above corollary is interpreted in terms of $R$-linear categories, the main result of \cite{Raab2019FormalPO}, Theorem~32, is obtained as a special case by Remark~\ref{rem:DM}.

More generally, if one cannot verify that $f$ can be rewritten to zero by $F$, there still might exist a set $G$ of $Q$-consequences of $F$ with divisor monomials selected by some $\DM$ such that $f\rewTrans_{G,\DM}0$ and Theorem~\ref{thm:transitive} can be applied.
Algorithmically, based on suitable monomial orderings, Procedure~\ref{proc:Q-completion} produces candidates $G$ such that Corollary~\ref{cor:reduction_crit} can be used to prove that $f$ is a $Q$-consequence of $F$ by standard polynomial reduction.

Notice that Procedure~\ref{proc:Q-completion} can be extended in various directions.
For example, in order to systematically generate more $Q$-consequences, reduced $S$-polynomials that are not $Q$-order compatible could be collected in a separate set, which should not be used for constructing and reducing new $S$-polynomials.
Instead of fixing a monomial ordering from the beginning, one might start with a partial ordering that is then extended during the completion procedure in order to make obtained $S$-polynomials $Q$-order compatible.
More generally, without any partial ordering on monomials, one might even consider compatible functions $\DM$ which not necessarily select only one divisor monomial per polynomial and aim at completing the induced rewriting relation. However, termination of such rewriting relations is an issue.
Finally, another topic for future research is to generalize the results of this paper to tensor reduction systems used for modelling linear operators as described in \cite{HosseinPoorRaabRegensburger2018}.

\bibliographystyle{plain}
\bibliography{Quivers_GB} 

\begin{thebibliography}{10}

\bibitem{Bergman1978}
George~M. Bergman.
\newblock The diamond lemma for ring theory.
\newblock {\em Adv. in Math.}, 29(2):178--218, 1978.

\bibitem{MR0506423}
Leonid~A. Bokut'.
\newblock Imbeddings into simple associative algebras.
\newblock {\em Algebra i Logika}, 15(2):117--142, 245, 1976.

\bibitem{BokutChenLi2012}
Leonid~A. Bokut, Yuqun Chen, and Yu~Li.
\newblock Gr\"{o}bner-{S}hirshov bases for categories.
\newblock In {\em Operads and universal algebra}, volume~9 of {\em Nankai Ser.
  Pure Appl. Math. Theoret. Phys.}, pages 1--23. World Sci. Publ., Hackensack,
  NJ, 2012.

\bibitem{buchberger1965algorithmus}
Bruno Buchberger.
\newblock Ein {A}lgorithmus zum {A}uffinden der {B}asiselemente des
  {R}estklassenrings nach einem nulldimensionalen {P}olynomideal.
\newblock {\em Universit\"at Innsbruck, Austria, Ph. D. Thesis}, 1965.
\newblock English translation in J. Symbolic Comput.~41 (2006), 475--511.

\bibitem{DerksenWeyman2005}
Harm Derksen and Jerzy Weyman.
\newblock Quiver representations.
\newblock {\em Notices Amer. Math. Soc.}, 52(2):200--206, 2005.

\bibitem{GothenKing2005}
Peter~B. Gothen and Alastair~D. King.
\newblock Homological algebra of twisted quiver bundles.
\newblock {\em J. London Math. Soc. (2)}, 71(1):85--99, 2005.

\bibitem{GuiraudHoffbeckMalbos2019}
Yves Guiraud, Eric Hoffbeck, and Philippe Malbos.
\newblock Convergent presentations and polygraphic resolutions of associative
  algebras.
\newblock {\em Math. Z.}, 293(1-2):113--179, 2019.

\bibitem{HeltonStankus1999}
J.~William Helton and Mark Stankus.
\newblock Computer assistance for ``discovering'' formulas in system
  engineering and operator theory.
\newblock {\em J. Funct. Anal.}, 161:289--363, 1999.

\bibitem{HeltonStankusWavrik1998}
J.~William Helton, Mark Stankus, and John~J. Wavrik.
\newblock Computer simplification of formulas in linear systems theory.
\newblock {\em IEEE Trans. Automat. Control}, 43(3):302--314, 1998.

\bibitem{HeltonWavrik1994}
J.~William Helton and John~J. Wavrik.
\newblock Rules for computer simplification of the formulas in operator model
  theory and linear systems.
\newblock In {\em Nonselfadjoint operators and related topics ({B}eer {S}heva,
  1992)}, pages 325--354. Birkh\"{a}user, Basel, 1994.

\bibitem{hofstadler2019certifying}
Clemens Hofstadler, Clemens~G Raab, and Georg Regensburger.
\newblock Certifying operator identities via noncommutative gr{\"o}bner bases.
\newblock {\em ACM Communications in Computer Algebra}, 53(2):49--52, 2019.

\bibitem{HosseinPoorRaabRegensburger2018}
Jamal Hossein~Poor, Clemens~G. Raab, and Georg Regensburger.
\newblock Algorithmic operator algebras via normal forms in tensor rings.
\newblock {\em J. Symbolic Comput.}, 85:247--274, 2018.

\bibitem{Kronewitter2001}
F.~Dell Kronewitter.
\newblock Using noncommutative {G}r\"obner bases in solving partially
  prescribed matrix inverse completion problems.
\newblock {\em Linear Algebra Appl.}, 338:171--199, 2001.

\bibitem{LevandovskyySchmitz2019}
Viktor Levandovskyy and Leonard Schmitz.
\newblock Algorithmic algebraic proofs of identities between not only matrices.
\newblock 2019.
\newblock In preparation.

\bibitem{MR1299371}
Teo Mora.
\newblock An introduction to commutative and noncommutative {G}r\"obner bases.
\newblock {\em Theoret. Comput. Sci.}, 134(1):131--173, 1994.

\bibitem{Mora2016}
Teo Mora.
\newblock {\em Solving polynomial equation systems. {V}ol. {IV}. {B}uchberger
  theory and beyond}, volume 158 of {\em Encyclopedia of Mathematics and its
  Applications}.
\newblock Cambridge University Press, Cambridge, 2016.

\bibitem{Raab2019FormalPO}
Clemens~G Raab, Georg Regensburger, and Jamal~Hossein Poor.
\newblock Formal proofs of operator identities by a single formal computation.
\newblock {\em arXiv:1910.06165 [math.RA]}, 2019.

\bibitem{RosenkranzBuchbergerEngl2003}
Markus Rosenkranz, Bruno Buchberger, and Heinz~W. Engl.
\newblock Solving linear boundary value problems via non-commutative
  {G}r\"obner bases.
\newblock {\em Appl. Anal.}, 82:655--675, 2003.

\end{thebibliography}

\end{document}